\newtheorem{algorithm} {Algorithm}
\renewcommand{\iff}{\Leftrightarrow}
\newcommand{\mcomment}[1]{}
\newcommand{\A}{\ensuremath{\mathcal{A}}\xspace}
\newcommand{\D}{\ensuremath{\mathcal{D}}\xspace}
\newcommand {\union} {\ensuremath{\cup}}
\renewcommand {\iff} {\ensuremath{\Leftrightarrow}}
\newcommand {\qddc} {QDDC\xspace}
\newcommand {\ang}[1] {\ensuremath{\langle#1\rangle}}
\newcommand {\sq}[1] {\ensuremath{[#1]}}
\newcommand {\dsq}[1] {\ensuremath{[[#1]]}}
\newcommand {\dcurly}[1] {\ensuremath{\{\{#1\}\}}}
\newcommand {\len}[1] {\ensuremath{len(#1)}}
\newcommand {\intv}[1] {\ensuremath{Intv(#1)}}
\newcommand {\slen} {\ensuremath{slen} }
\newcommand {\scount} {\ensuremath{scount} }
\newcommand {\sdur} {\ensuremath{sdur} }
\newcommand {\true} {\ensuremath{true}~}
\newcommand {\dom}[1] {\ensuremath{dom(#1)}}
\newcommand {\nat} {\ensuremath{\mathbb{N}}}
\newcommand{\oomit}[1]{}
\newcommand{\df}{=}
\newcommand{\DCSYNTH}{DCSynth\xspace}
\newcommand{\invariant}{\mbox{\bf inv\/}}
\newcommand{\MPNC}{MPS\xspace}
\newcommand{\GODSC}{MPHOS\xspace}
\newcommand{\TYPE}{Type\xspace}
\newcommand{\mpnc}{mps\xspace}
\newcommand{\godsc}{mphos\xspace}
\begin{document}
\title{\DCSYNTH: Guided Reactive Synthesis with Soft Requirements}
\author{{Amol Wakankar$^1$} \and {Paritosh K. Pandya$^2$} \and {Raj Mohan Matteplackel$^2$}
}
 
\institute{
Homi Bhabha National Institute, Mumbai, India.\\
Bhabha Atomic Research Centre, Mumbai, India.\\
\and
Tata Institute of Fundamental Research, Mumbai 400005, India.\\
}


\maketitle

\setcounter{footnote}{0}

\begin{abstract}
In reactive controller synthesis, 
 a number of implementations (controllers) are possible for a given specification because of incomplete nature of specification. To choose the most desirable one from the various options, we need to specify additional properties which can guide the synthesis. 
 In this paper, We propose a technique for guided controller synthesis from regular requirements which are specified  using an interval temporal logic \qddc.
We find that \qddc is well suited for guided synthesis due to its superiority in dealing with both qualitative and quantitative specifications. 
Our framework allows specification consisting of both {\em hard} and {\em soft} requirements as \qddc formulas.

We have also developed a method and a tool \DCSYNTH, which computes a controller that \emph{invariantly} satisfies the \emph{hard requirement} and it {\em optimally} meets the {\em soft requirement}. 
The proposed technique is also useful in dealing with  conflicting i.e., unrealizable requirements, by making some of the them as \emph{soft} requirements. 
Case studies are carried out to demonstrate the effectiveness of the soft requirement guided synthesis in obtaining high quality controllers. The quality of the synthesized controllers is  compared using metrics measuring both the \emph{guaranteed} and the \emph{expected case} behaviour of the controlled system. Tool \DCSYNTH facilitates such comparison.
\end{abstract}

\oomit{
\begin{abstract}
In reactive controller synthesis, 
problem is classically defined qualitatively i.e. the given logical specification is realizable or not. However, 
because of abstract nature of specifications, various implementations (controllers) are possible for a given specification. To choose the more desirable among various options, we need to specify additional properties which guides the synthesis to get the preferred implementation. This paper proposes a technique for the synthesis of such controllers from logical specification in an interval temporal logic \qddc.
\qddc is well suited for both qualitative as well as quantitative specification. 
The specification in our framework is classified as {\em hard} and {\em soft} requirements, which are given as \qddc formulas. The aim is to compute the controller that invariantly satisfies the hard requirements and {\em optimally} meets the {\em soft requirement}. 
The technique is also useful in dealing with conflicting (i.e. unrealizable) requirements. This is done by making few of the conflicting requirements as \emph{soft}. The proposed technique is implemented in a tool \DCSYNTH.  We illustrate our approach using case studies and show the effect of soft requirements on the quality of the synthesized controller by defining metrics for \emph{guaranteed} and \emph{expected case} behaviour.

\end{abstract}
}
\oomit{
\begin{abstract}
This paper proposes a technique for the synthesis of high quality controllers from logical specification in an interval temporal logic 
{\em Quantified Discrete Duration Calculus} (\qddc).
The specification consists of {\em hard requirement} and the {\em soft requirement}. 
The aim is to compute the controller which guarantees that hard requirements hold invariantly. Moreover, it intermittently but {\em maximally} meets the
{\em soft requirement}. We show that this soft requirement guided synthesis provides a useful ability to specify and efficiently synthesize high quality controllers. The technique is also useful in dealing with conflicting requirements.
The proposed technique is implemented in a tool \DCSYNTH.  We illustrate our approach using a case study of a synchronous bus arbiter specification and we experimentally show the effect of soft requirements on the quality (worst case and expected case behaviour) of the synthesized controller.
\end{abstract}

}

\section{Introduction}
\label{section:intro}
\vspace{-1ex}
Reactive synthesis aims at constructing a controller (say a Mealy Machine) algorithmically from a given temporal logic specification of its desired behaviour.
Considerable amount of research has gone into the area of reactive synthesis
and several tools are available for experimenting with reactive synthesis\cite{SYN17}. However, existing tools do not have the capability to guide the synthesis towards the most desirable controller. In practice, user specification may be incomplete and it may contain certain requirements, which are not \textit{mandatory}, but desirable. We term the desirable properties as \textit{soft} requirements.

In this work, we propose a specification consisting of \emph{hard requirement} which are mandatory and needs to be satisfied \emph{invariantly}, and the \emph{soft requirement} which are desirable and should be satisfied at as many points in the execution as possible. We choose to specify the hard and soft requirements as regular properties in logic \emph{Quantified Discrete Duration Calculus}(\textbf{\qddc})\cite{Pan01b,Pan01a}.
%
%
%
%
\qddc is the discrete time variant of Duration Calculus proposed by Zhou et.al. \cite{CHR91,ZH04}.

Regular properties can conceptually be specified by a deterministic finite state automaton (DFA). At any point in the execution, a regular property holds provided the past behaviour upto the point is accepted by its DFA.  
The study of synthesis of controllers for such properties was pioneered by Ramadge and Wonham \cite{RW87,RW89,LRT17}. 
\qddc\/ is an interval temporal logic, which has the expressive power of regular languages. Section \ref{section:qddc} presents the syntax and semantics of this logic. Prior work \cite{SP05,Pan01a,Pan01b} shows that any formula in \qddc\/ can be effectively translated into a language equivalent DFA over finite words. Logic \qddc's 
bounded counting features, interval based modalities and regular expression like primitives allow complex qualitative and quantitative  properties (such as latency, resource constraints) to  be specified succinctly and modularly
(see the example below). 
With illustrations, papers \cite{MPW17, Pan01a, Pan02} show how quantitative and qualitative regular properties can be succinctly  specified in \qddc. Paper \cite{MPW17} also gives a comparison with other logics such as LTL and PSL. It should be noted that
\qddc does not allow specification of general liveness properties, however, time bounded liveness can be specified. The following example motivates need for soft requirement guided synthesis.
\vspace*{-1ex}
\begin{example}[Arbiter for Mutually Exclusive Shared Resource]
\label{exam:arbiter}
The arbiter has an input $r_i$ (denoting request for access) and an output $a_i$ (denoting acknowledgement for access) for each client $1 \leq i \leq n$. The specification consists of the following two 
properties, given as \qddc\/ formulas together with their intuitive explanation. Section
\ref{section:qddc} gives the formal syntax and semantics of \qddc.

\noindent $-$\textit{Mutual Exclusion Requirement} $R_1$: $[[~\land_{i \neq j} ~\neg ( a_i \land a_j) ~]]$, states that at every point, the access to the shared resource should be mutually exclusive.

\noindent $-$\textit{k-cycle Response Requirement} $R_2$: $[](\land_i ~(([[r_i]] ~\&\&\ (slen>=(k-1)))~\Rightarrow (scount ~a_i > 0))$ , states that in any observation  interval spanning $k$ cycles if request from $i^{th}$ client ($r_i$) is \textit{continuously} high during the interval, then  that client should get at least one access ($a_i$)  within the observation interval. Modality $[]D$ states that  sub-formula  $D$ should hold for all observation intervals. Term $slen$ gives the length of the observation interval and the term $(scount~ P)$ counts the number of occurrences of proposition $P$ within the observation interval. The property $R2$ is asserted for each client $1 \leq i \leq n$.  

When $k < n$ no controller can satisfy both requirements (their conjunction is unrealizable); consider the case where all clients request all the time.
We may want to opt for an implementation, which mandatorily satisfies $R1$ and 
it tries to meet the $R2$ ``as much as possible''. This can be specified in our framework, by making requirement $R2$ as a \emph{soft requirement}. One possible controller in such a case will make $a_i$ true in a \textit{round-robin} manner for all the requesting clients. Let $r$ denote the number of clients
requesting simultaneously. The round-robin policy will mandatorily satisfy $R2$
under the assumption that invariantly $r \leq k$. But when  $r > k$ (\textit{overload condition}), this controller will be able to meet the response time requirement $R2$ for only $k$ out of $n$ clients. Using the soft requirement, we can also given priority to the clients which meet response time requirements under \textit{overload condition}. Thus, soft requirements  allow us to choose the preferred one from several candidate controllers. \qed
\oomit{ 
This specification has several possible implementation.  Lets consider $i=2$ and $k=1$ and two possible implementation for this case. 

\textit{Implementation ($I_1$)}: If only one of the request is high then allow it to access the resource by making the corresponding acknowledgement high. If both the requests are high, give priority to $1^{st}$ client by setting $a_1$ high (\emph{whenever possible}).  

\textit{Implementation $I_2$}: This is same as implementation $I_1$, except for it giving priority to the $2^{nd}$ client.

However, in such a case, existing synthesis tools will arbitrarily choose one of the above two options, which may not be the desired one. This calls for an additional capability, where user can guide the synthesis tool to get the desired implementation. If user wants to give preference to $I_1$ over $I_2$, then (s)he may add an additional requirement $R_3$ as follows.
\begin{itemize}
    \item[] \textit{Requirement} $R_3$: $true\textrm{\textasciicircum}\langle r_1 \&\& r_2 \Rightarrow a_1\rangle$ ; At any point, if both the requests are high then   $1^{st}$ client has the higher priority.
\end{itemize}
But, the problem with this additional requirement $R_3$ is that it conflicts with $R_2$. This makes the specification unrealizable as  $R_3$ cannot be satisfied invariantly without violating $R_2$, in the case when both the requests are continuously high. 
However, if we keep requirements $R_1$ and $R_2$ as \emph{hard} and $R_3$ as \emph{soft}, then the specification remains realizable and the synthesis method will produce the preferred implementation, which is $I_1$.  
} 
\end{example}

This paper introduces a tool {\em \DCSYNTH} which allows synthesis of controllers 
from {\em regular properties} (\qddc formulas). 
%
The specification in \emph{\DCSYNTH\/} is a tuple $(I, O,D^h,D^s)$, where $D^h$ and $D^s$ are \qddc\/ formulas over a set of input and output propositions $(I,O)$. Here,
$D^h$ and $D^s$ are the {\bf hard} and the {\bf soft} requirement, respectively\footnote{The tool supports more general lexicographically ordered list of soft requirements. However, we omit the general case for brevity}.
We use the term {\em supervisor} for a non-blocking Mealy machine which may non-deterministically produce one or more outputs for each input. A supervisor may be
refined to a sub-supervisor by resolving (pruning) the non-determinstic choice of
outputs (the sub-supervisor may use additional memory for making the choice.) We define
a determinism ordering on supervisors in the paper. A \emph{controller} is a deterministic supervisor.
Ramadge and Wonham  \cite{RW87,RW89} investigated the synthesis of the {\em maximally permissive}  supervisor for a regular specification. The maximally permissive supervisor 
is a unique supervisor, which encompasses all the behaviors invariantly satisfying the specified regular property (See Definition \ref{def:realizableAndMPNC}). 
The well known safety synthesis algorithm applied to the DFA for $D^h$ gives us the maximally permissive supervisor $\MPNC(D^h)$ \cite{ELTV17}. 
If no such supervisor 
exists, the specification is reported as unrealizable. 

 
Any controller obtained by arbitrarily resolving the nondeterministic choices for outputs in $\MPNC(D^h)$ is correct-by-construction. This results in several controllers with distinct behaviours (as shown by previous example). Thus, only correct-by-construction synthesis is not sufficient \cite{BCGHHJKK14}. Some form of  guidance must be provided to the synthesis method to choose 
among the possible controllers. 
We use the soft requirements to provide such guidance. Our synthesis method tries to choose a controller, which satisfies the soft requirements ($D^s$) ``as much as possible''. Soft requirement can also specify the desirable requirements, which cannot be met invariantly. For example, in a Mine-pump controller, as soft requirement ``keep the pump off unless mandated by the hard requirement'' specifies an energy efficient controller. Specification of scheduling, performance and quality constraints are often such desirable properties. 
Moreover, a specification may consist of a conjunction of  conflicting requirements. In this case, all the requirements cannot be invariantly met simultaneously. User may resolve the conflict(s) by making some of these requirements as soft. Therefore, soft requirements give us a capability to synthesize meaningful and practical controllers.

In \DCSYNTH, we formalize the notion of a controller meeting the soft requirement $D^s$ ``as much as possible'', by synthesizing a sub-supervisor of $\MPNC(D^h)$  (guaranteeing invariance of $D^h$), which maximizes 
the expected value of count of $D^s$ in next $H$ moves when averaged over all the inputs.
The classical value iteration  algorithm due to Bellman \cite{Bel57} allows us to compute this $H$-optimal sub-supervisor. This can be further refined to a controller as desired.  {\em Thus, our synthesis method gives a controller which, (a) invariantly
satisfies $D^h$ and (b) it is $H$-optimal for $D^s$ amongst all controllers meeting condition (a).}

The above synthesis method is implemented in tool \DCSYNTH. An efficient representation of DFA using BDDs, originally introduced by the tool MONA \cite{Mon01}, is used for representing both automata and supervisors.
We adapt the safety synthesis algorithm and the value iteration algorithm so that they work symbolically over this MONA DFA representation.
\oomit{ Moreover, being in the realm of regular properties, we are able to minimize automata and supervisors, giving significantly efficient 
synthesis and small controllers. The full paper gives detailed algorithms and the implementation details (See Appendix \ref{sec:ssdfa}) \cite{WPM18}, which are omitted here because of the lack of space.}

We illustrate our specification method and synthesis tool with the help of two case studies\footnote{\DCSYNTH\/ can be downloaded at \cite{WPM18} along with the specification files for experiments.}. 
We define metrics to compare the controllers for their  \textit{guaranteed} and \textit{expected behaviour}. The tool \DCSYNTH  facilitates measurement of both these metrics.
The main contributions of this paper are as follows:
\begin{itemize}
 \item We develop a technique for the synthesis of controllers from \qddc\/ requirements. 
This extends the past work on model checking interval temporal logic \qddc\cite{Pan01a,Pan01b,CP03,SPC05,SP05} with synthesis abilities. 
 \item We propose a method for 
 guided synthesis of controllers based on {\bf soft requirements} which are met in a $H$-optimal fashion. 
Conceptually, this enhances the  Ramadge-Wonham framework for optimal controller synthesis.
 \item We present a tool \DCSYNTH\/ for guided synthesis, which 
   \begin{itemize}
       \item represents and manipulates automata/supervisors using BDD-based  semi-symbolic DFA. It uses eager minimization for efficient synthesis, and
       \item provides facility to compare both the guaranteed and expected case behaviours of the candidate controllers. 
   \end{itemize}
 \item  We analyse the impact of soft requirements on the quality of the synthesized controllers experimentally using case studies.
\end{itemize}

The rest of the paper is arranged as follows. Section \ref{section:qddc} describes the syntax and semantics of \qddc. Important definitions are presented in Section \ref{sec:definitions}. Syntax of \DCSYNTH specification and the controller synthesis method are presented in Section \ref{section:dcsynth-spec}. 
Section \ref{section:motivation} discusses
case studies and experimental results. 
The paper is concluded with a discussion and related work in Sections \ref{section:discussion} and \ref{section:conclusion}. 

\section{Quantified Discrete Duration Calculus (\qddc) Logic}
\label{section:qddc}
Let $PV$ be a finite non-empty set of propositional variables. 
Let $\sigma$ a non-empty finite word over the alphabet
$2^{PV}$. It has the form 
$\sigma=P_0\cdots P_n$ where $P_i\subseteq PV$ for each $i\in\{0,\ldots,n\}$. 
Let $\len{\sigma}=n+1$, 
$\dom{\sigma}=\{0,\ldots,n\}$, 
$\sigma[i,j]=P_i \cdots P_j$ and $\sigma[i]=P_i$.

The syntax of a \emph{propositional formula} over variables $PV$ is given by:
\vspace{-1ex}
\[
\varphi := false\ |\ \true\ |\ p\in PV |\ !\varphi\ |\ \varphi~\&\&~\varphi\ |\ \varphi~||~\varphi
\]
with $\&\&, ||, !$ denoting conjunction, dis-junction and negation, respectively. Operators
such as $\Rightarrow$  and $\Leftrightarrow$ are defined as usual. 
Let $\Omega(PV)$ be the set of all propositional formulas over variables $PV$. 
%
Let $i\in\dom{\sigma}$. 
Then the satisfaction of propositional formula $\varphi$ at point $i$, denoted $\sigma,i\models\varphi$ is defined as usual and omitted here for brevity.
\oomit{
$\forall i\in\dom{\sigma}: \sigma, i \models \true$, $\sigma, i \models p$ iff $p\in\sigma[i]$, 
and $\sigma,i \models - \varphi$ iff $i>0$ and $\sigma, i-1 \models \varphi$. 
The satisfaction of
boolean combinations \verb#!# (not), \verb#&&# (and), \verb#||# (or) is  defined in a natural way.
}

The syntax of a \qddc formula over variables $PV$ is given by: 
\[
\begin{array}{lc}
D:= &\ang{\varphi}\ |\ \sq{\varphi}\ |\ \dsq{\varphi}\ |\ 
D\ \verb|^|\ D\ |\ !D\ |\ D~||~D\ |\ D~\&\&~D\ \\ 
&ex~p.\ D\ |\ all~p.\ D\ |\ slen \bowtie c\ |\ scount\ \varphi \bowtie c\ |\ sdur\ \varphi \bowtie c 
\end{array} 
\]
where $\varphi\in\Omega(PV)$, $p\in PV$, 
$c ~ \in\nat$ and $\bowtie\in\{<,\leq,=,\geq,>\}$. 

An \emph{interval} over a word $\sigma$ is of the form $[b,e]$ 
where $b,e\in\dom{\sigma}$ and $b\leq e$. 
Let $\intv{\sigma}$ be the set of all intervals over $\sigma$.
Let $\sigma$ be a word over $2^{PV}$ and let $[b,e]\in\intv{\sigma}$ be an interval. 
Then the satisfaction relation of a \qddc formula $D$ over
$\Sigma$ and interval $[b,e]$ written as $\sigma,[b,e]\models D$, is defined inductively as follows:
\[
\begin{array}{lcl}
\sigma, [b,e]\models\ang{\varphi} & \mathrm{\ iff \ } & b=e \mbox{ and } \sigma,b\models \varphi,\\
\sigma, [b,e]\models\sq{\varphi} & \mathrm{\ iff \ } & b<e \mbox{ and }
                                           \forall b\leq i<e:\sigma,i\models \varphi,\\
\sigma, [b,e]\models\dsq{\varphi} & \mathrm{\ iff \ } & \forall b\leq i\leq e:\sigma,i\models \varphi,\\
\sigma, [b,e]\models\dcurly{\varphi} & \mathrm{\ iff \ } & e=b+1 \mbox{ and }\sigma,b\models \varphi,\\
\sigma, [b,e]\models D_1\verb|^| D_2 & \mathrm{\ iff \ } & \exists b\leq i\leq e:\sigma, [b,i]\models D_1\mbox{ and }\sigma,[i,e]\models D_2,\\
\end{array}
\]
with Boolean combinations $!D$, $D_1~||~D_2$ and $D_1~\&\&~D_2$  defined in the expected way. 
We call word $\sigma'$ a $p$-variant, $p\in PV$, of a word $\sigma$ 
if $\forall i\in\dom{\sigma},\forall q\neq p:q\in \sigma'[i]\iff q\in \sigma[i]$. 
Then $\sigma,[b,e]\models ex~p.~D\iff\sigma',[b,e]\models D$ for some 
$p$-variant $\sigma'$ of $\sigma$ and 
$(all~p.~D) \Leftrightarrow (!ex~p.~!D)$. 
%

%


Entities \slen, \scount and \sdur are called \emph{terms}. 
The term \slen gives the length of the interval in which it is 
measured, $\scount\ \varphi$ where $\varphi\in\Omega(PV)$, counts 
the number of positions including the last point 
in the interval under consideration where $\varphi$ holds, and    
$\sdur\ \varphi$ gives the number of positions excluding the last point 
in the interval where $\varphi$ holds. 
Formally, for $\varphi\in\Omega(PV)$ we have 
$\slen(\sigma, [b,e])=e-b$, $\scount(\sigma,\varphi,[b,e])=\sum_{i=b}^{i=e}\left\{\begin{array}{ll}
					1,&\mbox{if }\sigma,i\models\varphi,\\
					0,&\mbox{otherwise.}
					\end{array}\right\}$ and 
$\sdur(\sigma,\varphi,[b,e])=\sum_{i=b}^{i=e-1}\left\{\begin{array}{ll}
					1,&\mbox{if }\sigma,i\models\varphi,\\
					0,&\mbox{otherwise.}
					\end{array}\right\}$

We also define the following derived constructs: 
$pt=\langle true \rangle$, $ext=!pt$, $\mathbf{ \langle \rangle D} = true\verb|^|D\verb|^|true$, $[]D=(!\langle \rangle!D)$ and $\mathbf{pref(D)}=!((!D)\verb|^|true)$. 
Thus, $\sigma, [b,e] \models []D$ iff $\sigma, [b',e']\models D$ 
for all sub-intervals $b\leq b'\leq e'\leq e$ and $\sigma, [b,e] \models \mathit{pref}(D)$ 
iff $\sigma, [b,e']\models D$ for all prefix intervals $b \leq e' \leq e$.

\begin{definition}[Language of a formula]
\label{def:dclang}
Let $\sigma,i \models D$ iff $\sigma, [0,i] \models D$, and 
$\sigma \models D$ iff $\sigma, \len{\sigma}-1 \models D$. We define
$L(D) = \{ \sigma\mid\sigma \models D \}$, the set of behaviours accepted by $D$. Formula $D$ is called valid, denoted $\models_{dc} D$, iff $L(D) = (2^{PV})^+$. \qed
\end{definition}
Thus, a formula $D$ holds at a point $i$ in a behaviour provided the {\bf past} of the point $i$ satisfies $D$.
\begin{theorem}
\label{theorem:formula-automaton}
 \cite{Pan01a} For every formula $D$ over variables $PV$ we can construct a Deterministic Finite Automaton (DFA) $\A(D)$ over alphabet $2^{PV}$ 
such that $L(\A(D))=L(D)$. We call $\A(D)$ a \emph{formula automaton} for $D$ or the monitor automaton for $D$. \qed
\end{theorem}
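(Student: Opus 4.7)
The plan is to proceed by structural induction on $D$, showing that for every subformula one can effectively build a DFA over $2^{PV}$ whose language consists of precisely those nonempty words $\sigma$ satisfying $\sigma,[0,\len{\sigma}-1]\models D$. Because the logic evaluates formulas over arbitrary intervals whereas $L(D)$ fixes the interval to $[0,\len{\sigma}-1]$, the right induction hypothesis actually needs to track interval satisfaction: for each $D$ I will construct an automaton $\A(D)$ whose states additionally encode, for every position $b\in\dom{\sigma}$, whether the sub-word $\sigma[b..i]$ satisfies $D$ after reading $\sigma[0..i]$. A compact way to realize this is to build an NFA with an ``interval-start'' marker and then determinize at the very end; equivalently one can maintain, as part of the DFA state, the finite set of pending sub-computations of component subformulas.

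For the base cases the automata are small and direct. A point formula $\ang{\varphi}$ and the interval formulas $\sq{\varphi},\dsq{\varphi},\dcurly{\varphi}$ become two- or three-state DFAs that check the required propositional evaluation at the boundary or all internal positions (using the test $\sigma,i\models\varphi$, which is propositional and requires no memory). The arithmetic terms $slen\bowtie c$, $scount\,\varphi\bowtie c$ and $sdur\,\varphi\bowtie c$ yield DFAs with a bounded counter $0,1,\ldots,c{+}1$ that increments on the appropriate letters and freezes once $c$ is exceeded, since $c\in\Nat$ is a fixed constant. Boolean combinations $D_1\,\verb|&&|\,D_2$, $D_1\,\verb#||#\,D_2$ and $!D$ are handled by standard product construction and complementation applied to the DFAs obtained inductively for the components.

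The two cases that carry the real content of the theorem are chop and existential quantification. For $D_1\verb|^|D_2$ the semantics requires guessing, during the run, a split position $b\leq i\leq e$ at which $D_1$ has just been recognised on $[b,i]$ and at which a fresh simulation of $D_2$ is launched on $[i,e]$; this is naturally expressed by an NFA obtained by an automata-theoretic concatenation of $\A(D_1)$ and $\A(D_2)$, adapted so that the initial state of $\A(D_2)$ is activated at every accepting configuration of $\A(D_1)$. For $ex\,p.\,D$ one projects the alphabet of $\A(D)$ from $2^{PV}$ onto $2^{PV\setminus\{p\}}$, which keeps the structure of $\A(D)$ but makes the transition relation nondeterministic. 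In both constructions the resulting NFA is then determinized by the standard subset construction, and $all\,p.\,D$ is obtained via the equivalence with $!ex\,p.\,!D$.

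The main obstacle is therefore the interplay of chop and quantifier alternation: each use of $ex$ or $^\frown$ requires a subset construction, and the nested negations forced by $all$ compound this into a non-elementary worst-case blow-up. This is unavoidable in general, but for the purposes of the statement it only matters that the construction be effective and yield a DFA for each fixed formula, which the induction delivers. A routine verification at each inductive step that $L(\A(D))$ agrees with $L(D)$ on all words $\sigma\in (2^{PV})^+$ closes the proof.
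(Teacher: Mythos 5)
The paper itself does not prove this statement: it is imported wholesale from \cite{Pan01a} (and realized in the tool DCVALID via MONA), so your reconstruction has to be judged against the standard construction in that line of work. Your outline is essentially that construction, and it is sound in its main steps: structural induction, small DFAs with bounded counters for the atomic formulas and the terms $slen$, $scount$, $sdur$, product and complementation for the Boolean cases, a guess-the-split NFA for chop, projection followed by subset construction for $ex\,p$, and $all\,p$ via the dual $!ex\,p.\,!D$, with the non-elementary blow-up correctly attributed to the alternation of determinizations and complementations.

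Two points need tightening. First, your stated induction hypothesis --- that the state of $\A(D)$ records, for \emph{every} position $b\in\dom{\sigma}$, whether $\sigma[b..i]$ satisfies $D$ --- is not finite-state as written (the number of candidate start points grows with the word) and is also not what your constructions actually use. The invariant that makes the induction go through is the anchored one, $L(\A(D))=\{\sigma \mid \sigma,[0,\len{\sigma}-1]\models D\}$, together with a locality lemma: satisfaction of any \qddc\ formula on $[b,e]$ depends only on the letters $\sigma[b..e]$ (this needs a one-line check even in the $ex\,p$ case, since the $p$-variant may differ outside the interval). With locality, the chop and quantifier cases reduce to operations on the anchored languages of the subformulas, and the ``set of pending sub-computations'' you mention is exactly the subset construction applied to the copies of $\A(D_2)$ launched at candidate split points --- so state this as the invariant rather than the per-position bookkeeping. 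Second, the chop case is a \emph{fusion}, not an ordinary concatenation: the split point $i$ belongs to both subintervals, so the letter $\sigma[i]$ must be the last letter consumed by the $\A(D_1)$-copy and simultaneously the first letter fed to the freshly launched $\A(D_2)$-copy. A plain concatenation NFA (an $\varepsilon$-jump from accepting states of $\A(D_1)$ to the initial state of $\A(D_2)$) recognizes the wrong language, namely the split of $\sigma$ into disjoint adjacent blocks. Your word ``adapted'' gestures at this, but the adaptation is precisely where the construction can go wrong, so it should be made explicit. With these two repairs the argument matches the cited proof.
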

A tool DCVALID implements this formula automaton construction in an efficient manner by internally using the tool MONA \cite{Mon01}. 
It gives {\em minimal, deterministic} automaton (DFA) for the formula $D$.
\oomit{
MONA uses multi-terminal BDDs for efficiently representing the automata and performing operations such as product, projection, determinization and minimization, thereby permitting automaton construction for fairly large formulas, in spite of poor worst case complexity. 
}
We omit the details here. The reader may refer to several papers  on \qddc\/ 
for detailed description and examples of \qddc\/ specifications as well as  its model checking tool DCVALID \cite{Pan01a,MPW17,Pan01b,CP03,SPC05}.

\section{Supervisor and Controller}
\label{sec:definitions}
In this section we present \qddc\/  formulas and automata where variables $PV=I \cup O$ are partitioned into disjoint sets of input variables $I$ and output variables $O$. It is known that supervisors and controllers can be expressed as Mealy machines with special properties. Here we  show how Mealy machines can be represented as special form of Deterministic finite automata (DFA).  This representation allows us to use the MONA DFA library \cite{Mon01} to  compute supervisors and controllers efficiently using our tool \DCSYNTH.
\begin{definition}[Output-nondeterministic Mealy Machine]
\label{def:nondetmm}
A total and Deterministic Finite Automaton (DFA) over input-output alphabet $\Sigma=2^I \times 2^O$ is a tuple $A=(Q,\Sigma,s,\delta,F)$ 
having conventional meaning, where $\delta:Q \times 2^I \times 2^O \rightarrow Q$. An {\bf output-nondeterministic Mealy machine} is a DFA with a unique reject (or non-final) state $r$ 
which is a sink state i.e., $F= Q - \{r\}$ and $\delta(r,i,o)=r$ for all $i \in 2^I$, $o \in 2^O$.  \qed
\end{definition}

The intuition behind this definition is that the transitions from $q \in F$ to $r$ are
forbidden (and kept only for making the DFA total).
The language of any such Mealy machine is prefix-closed. 
Recall that for a Mealy machine, $F=Q-\{r\}$.
A Mealy machine is 
{\bf deterministic} if $\forall s \in F$, $\forall i \in 2^I$, $\exists$ at most one $o \in 2^O$  such that $\delta(s,i,o) \not=r$. 

\begin{definition}[Non-blocking Mealy Machine]
An output-nondeterministic Mealy machine is called {\bf non-blocking} if $\forall s \in F$, $\forall i \in 2^I$ $\exists o \in 2^O$ such that
$\delta(s,i,o) \in F$. It follows that 
for all input sequences a non-blocking Mealy machine can produce one or more output sequence without ever getting into the reject state. \qed 
\end{definition}
For a Mealy machine $M$ over variables $(I,O)$, its language $L(M) \subseteq (2^I \times 2^O)^*$. A word $\sigma \in L(M)$ can also be represented as pair $(ii,oo) \in ((2^I)^*,(2^O)^*) $ such that
$\sigma[k] = ii[k] \cup oo[k], \forall k \in dom(\sigma)$. Here $\sigma, ii, oo$ must have the same length. Note that in the rest of this paper, we do not distinguish between $\sigma$ and $(ii,oo)$.
Also, for any input sequence $ii \in (2^I)^*$, we define $M[ii] = \{ oo ~\mid~ (ii,oo) \in L(M) \}$. 

\begin{definition}[Controllers and Supervisors]
 An output-nondeterministic Mealy machine which is non-blocking is called a {\bf supervisor}.  An deterministic supervisor is called a {\bf controller}. \qed
\end{definition}
The non-deterministic choice of outputs in a supervisor denotes unresolved decision.
The determinism ordering defined below allows supervisors to be refined into controllers.

\oomit{\color{blue}
See Figure \ref{fig:mpnc} in Appendix \ref{section:2cellexample} for an example of a output-nondeterministic Mealy machine which is non-blocking.
We represent output-nondeterministic Mealy machines and controllers as DFAs to take advantage of well established 
BDD based  DFA libraries such as those in tool MONA \cite{Mon01}. This library is used to represent and manipulate the supervisors in tool \DCSYNTH.
}

\begin{definition}[Determinism Order and Sub-supervisor]
 Given two supervisors $S_1$ and  $S_2$, we say  $S_1 \leq_{det} S_2$ ($S_2$ is {\em more deterministic} than $S_1$),
 iff $L(S_2) \subseteq L(S_1)$.  We call $S_2$ to be a {\em sub-supervisor} of 
 $S_1$. \qed
 \end{definition}
Note that being supervisors, they are both non-blocking and hence 
$\emptyset \subset S_2[ii] \subseteq S_1[ii]$ for any $ii \in (2^I)^*$.
The supervisor $S_2$ may make use of additional memory for resolving and pruning
the non-determinism in $S_1$. 

For technical convenience, we define a notion of \textit{indicator variable} for a \qddc\/ formula
(regular property). The idea behind this is that the indicator variable $w$ witnesses the truth of a formula $D$ at any point in execution. Thus,
\vspace{-1ex}
\[
   Ind(D,w) \df pref(EP(w) \Leftrightarrow D)
\]
Here, $\mathbf{EP(w)} = (true \verb#^# \langle w \rangle)$, i.e. $EP(w)$ holds at a point $i$, if variable $w$ is true at that point $i$. Hence, $w$ will be $true$ exactly on those points where $D$ is $true$. The
formula automaton $\A(Ind(D,w))$ gives us a controller with input-output alphabet $(I\cup O,w)$ such that it outputs $w=1$ on a transition iff the past satisfies $D$. Since our formula automata are minimal DFA, 
$\A(Ind(D,w))$ characterizes the least memory needed to track the truth
of formula $D$.

\oomit{
These indicator variables can be used as auxiliary propositions in another formula using the notion of cascade composition $\ll$ defined below.
\begin{definition}[Cascade Composition]
\label{def:indDef}
Let $D_1, \ldots, D_k$ be \qddc\/ formulas over input-output variables $(I,O)$ and let $W=\{w_1, \ldots, w_k\}$ be the corresponding set of fresh indicator variables i.e. $(I \cup O) \cap W = \emptyset$. Let $D$ be a formula over variables $(I \cup  O \cup W)$. Then, the cascade composition $\ll$ and its equivalent \qddc\/ formula are as follows:
\[
  D \ll \langle Ind(D_1,w_1), \ldots, Ind(D_k,w_k) \rangle
 \quad  \df \quad D \land \bigwedge_{1 \leq i \leq k} ~ pref(EP(w_i) \Leftrightarrow D_i)
\]
This composition gives a formula over input-output variables $(I,O \cup W)$. \qed
\end{definition}
Cascade composition provides a useful ability to modularize a formula using auxiliary
propositions $W$ which witness regular properties given as \qddc formulas. 
\begin{example}
\label{exam:cascade}
Let formula $D = (true\verb#^#\langle !r_1\rangle\verb#^#(slen=1) ~||~ true\verb#^#\langle a_1 \rangle)$ which holds at a point provided either the current point satisfies $a_1$ or the previous point does not satisfy $r_1$. Then, for a propositional variable $c$, the formula $D \ll Ind(D,c)$ is equivalent to the formula $(D ~\&\&~ \mathit{pref}(EP(c) ~\Leftrightarrow ~D)$. This states that at each point, $\verb#c#$ is $true$ $\mathit{iff}$ $D$ holds. \qed
\end{example}
}

\section{\DCSYNTH\/ Specification and  Controller Synthesis}
\label{section:dcsynth-spec}
This section defines the \DCSYNTH\/ specification and presents the algorithm used in our tool \DCSYNTH for soft requirement guided controller synthesis from a \DCSYNTH specification. The process of synthesizing a controller as discussed in Section \ref{sec:synthesisAlgo} uses three main algorithms
given in Sections \ref{sec:MPNCConstruction}-\ref{sec:determinization}.
\vspace{-1ex}
\subsection{Invariance Properties and Maximally Permissive Supervisor}
\label{sec:MPNCConstruction}

A \qddc\/ formula $D$ specifies a regular property which may hold intermittently during a behaviour (see Definition \ref{def:dclang}).
An important class of properties, denoted by $\invariant ~D$, states that $D$ must hold invariantly during the system behaviour.

\begin{definition}
\label{def:realizableAndMPNC}
Let ${\mathbf{S}}~ \mathbf{realizes} \mathbf{~\invariant~ D}$ denote that a supervisor $S$  realizes {\em invariance} of  \qddc formula $D$ over variables $(I,O)$. Define ${\mathbf{S}}~ \mathbf{realizes} \mathbf{~\invariant~ D}$ provided $L(S) \subseteq L(D)$. Recall that, by the definition of supervisors, $S$ must be non-blocking.
A supervisor $S$ for a formula $D$ is called {\bf maximally permissive} $\mathit{iff}$ $S \leq_{det} S'$ holds for any supervisor $S'$ such that  $S' ~\mbox{\textbf{realizes}} ~\invariant~ D$. This $S$ (when it exists) is unique upto language equivalence of automata, and the minimum state maximally permissive supervisor is denoted as $\textbf{\MPNC}(D)$. \qed
 \end{definition}


Now, we discuss how $\MPNC(D)$  for a given \qddc formula $D$ is computed.
\vspace{-1ex}
\begin{enumerate}
\item Language equivalent DFA $\A(D)=\langle S,2^{I\union O},\delta,F\rangle$
is constructed for formula $D$ (Theorem \ref{theorem:formula-automaton}). 
The standard safety synthesis algorithm \cite{GTW02} over  
$\A(D)$ gives us the desired $\MPNC(D)$ as outlined in the following steps.

\item We first compute the \emph{largest} set of winning states $G \subseteq F$ with the following property: $s \in G$ iff $\forall i\exists o:\delta(s,(i,o)) \in G$. Let 
$Cpre(\A(D),X)= \{s ~\mid~ \forall i \exists o:\delta(s,(i,o)) \in X \}$. 
Then we iteratively compute $G$ as follows:\\ 
\hspace*{0.4cm} G=F; \\ 
\hspace*{0.5cm}{\textbf do} \\ 
\hspace*{1cm}G1=G; \\
\hspace*{1cm}G=Cpre($\A(D)$,G1); \\
\hspace*{0.5cm}{\textbf while} (G != G1); 

\item If initial state $s \notin G$, then the specification is \emph{unrealizable}. Otherwise, $\MPNC(D)$ is obtained by declaring $G$ as
the set of final states and retaining all the transitions in $\A(D)$ between states in $G$ and redirecting the remaining
transitions of $\A(D)$ to a unique reject state $r$ which is made a sink state. 
\end{enumerate}
\vspace{-1ex}
\begin{proposition}
\label{prop:mpnccorrect}
For a given \qddc formula $D$ the above algorithm computes the maximally permissive supervisor $\MPNC(D)$. \qed
\end{proposition}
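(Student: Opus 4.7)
The plan is to verify three things about the output $S$ of the algorithm: (i) $S$ is a well-formed supervisor (an output-nondeterministic, non-blocking Mealy machine), (ii) $S$ realizes $\invariant D$, and (iii) $S$ is maximally permissive. The correctness of the fixed point computation of $G$ itself (that the loop terminates on the greatest fixed point of the monotone operator $X \mapsto F \cap Cpre(\A(D),X)$ on the finite lattice $2^F$) is a standard Knaster--Tarski-style observation, so I would only mention it in passing and then focus on what $G$ is used for.

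First I would record the fixed point characterization: $G$ is the largest subset of $F$ such that $G \subseteq Cpre(\A(D),G)$, i.e., $\forall s \in G, \forall i \in 2^I, \exists o \in 2^O: \delta(s,(i,o)) \in G$. From this, step (i) is immediate: in the output $S$, the set of non-reject states is exactly $G$, and the fixed point property is precisely the non-blocking condition from Definition of non-blocking Mealy machines; the remaining structural requirements (uniqueness of sink reject state, totality, determinism of $\delta$) are preserved from $\A(D)$ by construction. For step (ii), I would observe that $S$ is obtained by pruning transitions of $\A(D)$ and that $G \subseteq F$, so every accepting run of $S$ is also an accepting run of $\A(D)$; hence $L(S) \subseteq L(\A(D)) = L(D)$. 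Since $S$ has a unique reject sink and $F_S = G$, the language $L(S)$ is prefix-closed, so every prefix of every word in $L(S)$ lies in $L(D)$. This is exactly $L(S) \subseteq L(D)$ as required by Definition \ref{def:realizableAndMPNC}, and in fact gives invariance of $D$ point by point along all behaviours of $S$.

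The main obstacle, and the step I expect to need the most care, is step (iii): maximal permissiveness. Let $S'$ be any supervisor with $S' \text{ realizes } \invariant D$. I would prove by induction on the length of a word $w \in (2^{I\cup O})^*$ that if $w \in L(S')$ then the state $\delta^*(s_0, w)$ reached in $\A(D)$ lies in $G$. The base case $w = \varepsilon$ follows because $S'$ being non-blocking and its language contained in $L(D)$ forces the initial state to satisfy the winning condition, and hence to be in the greatest fixed point $G$. For the inductive step, fix $w \in L(S')$ and let $s = \delta^*(s_0,w)$. For each input $i$, non-blockingness of $S'$ yields some $o$ with $w\cdot(i,o) \in L(S') \subseteq L(D)$; by induction applied to this extended word, $\delta(s,(i,o)) \in G$. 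Thus $s \in Cpre(\A(D),G) \cap F = G$. Consequently every transition used by $S'$ is a transition between $G$-states and is therefore retained in $S$, giving $L(S') \subseteq L(S)$, i.e., $S \leq_{det} S'$.

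Finally I would note that the maximally permissive supervisor is unique up to language equivalence (Definition \ref{def:realizableAndMPNC}), so once minimization is applied (which is implicit in our MONA-based representation) we obtain the minimum-state $\MPNC(D)$. Together these three steps establish the proposition; the realizability claim (initial state $s \in G$ iff the specification is realizable) also drops out, since unrealizability is equivalent to the greatest fixed point excluding the initial state, which by the argument above is equivalent to the non-existence of any supervisor for $\invariant D$.
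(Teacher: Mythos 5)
Your decomposition (well-formedness of the output as a non-blocking Mealy machine, $L(S)\subseteq L(D)$, and maximal permissiveness) is exactly the content that needs to be checked; the paper itself does not spell this out — it simply appeals to Theorem \ref{theorem:formula-automaton} plus the correctness of the standard safety-synthesis algorithm of \cite{GTW02} — so supplying the detail is fine, and your steps (i) and (ii) are correct as written.

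The problem is in step (iii). Your ``induction on the length of $w$'' is not well-founded: in the inductive step you establish $\delta(s,(i,o))\in G$ for the \emph{longer} word $w\cdot(i,o)$ by ``induction applied to this extended word'', i.e.\ you invoke the hypothesis on a word of length $|w|+1$ while proving the claim for $w$; the base case has the same circularity (to place the initial state in $G$ you already need its successors to be in $G$). As stated the argument regresses forever and proves nothing. The repair is standard and uses the fixed-point characterization you recorded but did not exploit: either (a) let $R=\{\delta^*(s_0,w)\mid w\in L(S')\}\cup\{s_0\}$, show $R\subseteq F$ (since $L(S')\subseteq L(D)$) and $R\subseteq Cpre(\A(D),R)$ (non-blockingness of $S'$ gives, for each $i$, an $o$ with $w\cdot(i,o)\in L(S')$, so the successor is again in $R$), and conclude $R\subseteq G$ because $G$ is the \emph{largest} such set; or (b) induct on the iteration index of the loop, i.e.\ with $G_0=F$ and $G_{k+1}=G_k\cap Cpre(\A(D),G_k)$, prove for every $k$ that all $L(S')$-reachable states lie in $G_k$ (here the induction variable decreases correctly, since the hypothesis for $k$ is used to prove the claim for $k+1$), and use termination $G=G_K$. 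With either repair, prefix-closedness of $L(S')$ gives that every run of $\A(D)$ on a word of $L(S')$ stays inside $G$, hence the word is in $L(\MPNC(D))$, yielding $\MPNC(D)\leq_{det} S'$; the rest of your write-up (uniqueness up to language equivalence, the realizability criterion via the initial state) then stands.
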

The proposition follows straightforwardly by combining Theorem \ref{theorem:formula-automaton} with the correctness of standard safety synthesis algorithm \cite{GTW02}. We omit a detailed proof.
\oomit{
The objective is to synthesize a 
deterministic controller $Cnt$ which (a) {\bf invariantly} satisfies the
hard requirement $D^h$, and (b) it is {\bf $H$-optimal for $D^s$} amongst all controllers satisfying (a).
$H$-optimality maximizes the Expected count of (intermittent) occurrence of soft requirement $D^s$ over next $H$ steps of execution. This count is averaged over all input sequences of length $H$. 
Appendix \ref{section:supervisor} gives a formal definition of controller/supervisor and a supervisor
invariantly satisfying a formula.
}

\vspace{-1ex}
\subsection{Maximally Permissive $H$-Optimal Supervisor (\GODSC)}
\label{sec:HoptimalComputation} 
Given a supervisor $S$ and a desired \qddc formula $D$ which should hold ``as much as possible'' (both are over input-output variables $(I,O)$), we give a method for constructing an ``optimal'' sub-supervisor of $S$, which maximizes 
the expected value of count of $D$ holding in next H moves when averaged over all the inputs.

First consider $\A^{Arena} = S \times \A(Ind(D,w))$ which is a supervisor over input-output variables $(I,O \cup \{w\})$. It augments
$S$ by producing an additional output $w$ which witnesses the truth of $D$. It has the property: $L(\A^{Arena}) \downarrow (I \cup O)=L(S)$. Also for $\sigma \in L(\A^{Arena})$ and $i \in dom(\sigma)$ we have
$w \in \sigma[i]$ iff $\sigma[0:i] \models D$. Thus, every transition of $\A^{Arena}$ is labelled with $w$ iff $D$ holds on taking the transition.
Let the weight of transitions labelled with $w$ be 1 and 0 otherwise. Thus, for $o \in 2^{(O \cup \{w\})}$ let $wt(o)=1$ if $w\in o$ and $0$ otherwise.
Technically, this makes $\A^{Arena}$ a weighted automaton.

In the supervisor $\A^{Arena}=(Q,\Sigma,s,\delta,Q-\{r\})$, where $r$ is the unique reject state, we define for $(q \in Q) \not= r$ and $i \in 2^I$, set $LegalOutputs(q,i)=\{ o ~\mid~ \delta(q,i) \not=r \}$.
We also define a deterministic selection rule as function $f$ s.t. ~$f(q,i) \in LegalOutputs(q,i)$ and a non-deterministic selection rule $F$ as function $F$ s.t.$F(q,i) \in \{ O \subseteq LegalOutputs(q,i) ~\mid~ O \neq \emptyset\}$.
Let $H$ be a natural number. Then $H$-horizon policy $\pi$ is a sequence $F_1,F_2, \ldots, F_H$ of non-deterministic selection rules. A deterministic policy will use only deterministic selection rules. A policy is stationary (memory-less) if each $F_i$ is the same independently of $i$.

Given a state $s$, a  policy $\pi$ and an input sequence   $ii \in (2^I)^H$ (of length $H$), we  define $L(\A^{Arena},ii,s)$ as all runs of $\A^{Arena}$ over
the input $ii$ starting from state $s$ and  $L^\pi(\A^{Arena},ii,s)$ as all runs over input $ii$ starting  from $s$ and following the selection rule $F_i$ at step $i$. Each run has the form $(ii,oo)$.  Let $Value(ii,oo) = \Sigma_{1 \leq i \leq \#ii} ~~wt(oo[i]))$.
Thus, $Value(ii,oo)$ gives the count of $D$ holding during behaviour fragment $(ii,oo)$.  Then, we define
$VMIN^\pi(s,ii) = min \{ Value(ii,oo) ~\mid~ (ii,oo) \in L^\pi(\A^{Arena},ii,s)\}$, which 
gives the minimum possible count of $D$  among all the runs of $S$ under policy $\pi$ on input $ii$,  starting with state $s$.
We also define,  $VMAX(s,ii) = max \{ Value(ii,oo) ~\mid~ (ii,oo) \in L(\A^{Arena},ii,s)\}$, which  gives the maximum achievable count. 
Note that $VMAX$ is independent of any policy.

Given a horizon value (natural number) $H$, $\A^{Arena}$ and a non-deterministic $H$-horizon policy $\pi$, we define utility values $ValAvgMin^\pi(s)$ and $ValAvgMax(s)$ for each state $s$ of $\A^{Arena}$ as follows.
\vspace{-1ex}
\[
\begin{array}{l}
 ValAvgMax(s) ~=~ \mathbb{E}_{ii \in (2^I)^H} ~VMAX(s,ii) \\
 ValAvgMin^\pi(s) ~=~ \mathbb{E}_{ii \in (2^I)^H} ~VMIN^\pi(s,ii) 
\end{array}
\]
Thus, intuitively, $ValAvgMax(s)$ gives the maximal achievable count of $D$ from state $s$, when averaged over all inputs of length $H$.
Similarly, $ValAvgMin^\pi(s)$ gives the minimal such count for $D$ under policy $\pi$, when averaged over all inputs of length $H$.
Our aim is to construct a horizon-$H$ policy $\pi^* = argmax_\pi ~~ValAvgMin^{\pi}(s)$. This will turn out to be a stationary policy given by a selection rule $F^*$. This rule can be implemented as a supervisor denoted by $\GODSC(\A^{Arena},H)$. We now give its construction.

The well known value iteration algorithm allows us to efficiently compute $ValAvgMax(s)$ as recursive function $Val(s,H)$ below.
\vspace{-1ex}
\[
 \begin{array}{l}
 Val(s,0) = 0 \\
 Val(s, p+1) = \mathbb{E}_{i \in 2^I}~~ max_{o \in 2^{(O \cup \{w\})} ~:~ \delta(s,(i,o)) \neq r}~  \\
 \hspace*{4cm} \{ wt(o) ~+~ Val(\delta(s,(i,o)), p) \} 
\end{array}
\]
We omit the straightforward proof that $Val(s,H)=ValAvgMax(s)$ (see \cite{Put94}).

Having computed this, the optimal selection rule $F^*$ giving stationary policy $\pi^*$ is given as follows: 
For each state $s \in \A^{Arena}$ and each input $i\in 2^I$, 
\vspace{-1ex}
\[
 \begin{array}{l}
 F^*(s,i) =  argmax_{o \in 2^O} \{ wt(o) ~+~  Val(s,H) \\
 \hspace*{2cm} ~\mid~  \delta_{\A^{Arena}}(s,(i,o)) = s' \land s'\not=r \} 
\end{array}
\]
Note that $F^*(s,i)$ is non-deterministic as more than one output $o$ may satisfy the $argmax$ condition. The following well-known lemma
states that stationary policy $\pi^*$ using the selection rule $F^*$ is $H$-optimal.
\begin{lemma}
\label{lemma:godsccorrect1}
 For all states $s$ of $\A^{Arena}$, $ValAvgMin^{\pi^*}(s) =
 ValAvgMax(s)$ always holds. Therefore, for all states $s$ of $\A^{Arena}$ and for any $H$-horizon policy $\pi$, $ValAvgMin^\pi_H(s) \leq ValAvgMin^{\pi^*}(s)$ also holds. \qed
\end{lemma}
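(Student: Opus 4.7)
The plan is to leverage the Bellman optimality of the value-iteration recursion for $Val$, together with the fact that the selection rule $F^*$ admits \emph{every} output that keeps the accumulated value on an optimal trajectory. First I would establish, by backward induction on $p \in \{0, \ldots, H\}$, that $Val(s,p)$ coincides with the optimal expected count over horizon $p$ starting at $s$, i.e.\ with $\mathbb{E}_{ii \in (2^I)^p}[VMAX_p(s,ii)]$, where $VMAX_p$ is the analogue of $VMAX$ restricted to length-$p$ inputs. The base case $p=0$ is immediate. For the inductive step, linearity of expectation over the uniformly drawn first input, combined with the fact that $\max$ over outputs distributes over the sum $wt(o) + (\text{continuation value})$ (the first summand depending only on the current transition), reproduces the recursion verbatim. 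As an immediate corollary, $Val(s,H) = ValAvgMax(s)$.

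The main step is to show that for every input sequence $ii \in (2^I)^H$, all runs in $L^{\pi^*}(\A^{Arena}, ii, s)$ achieve the same total weight, namely $VMAX(s,ii)$. I would proceed by forward induction on the position $k$ along $ii$, with the invariant: after following any $F^*$-admissible outputs on the prefix $ii[1..k]$ and reaching state $s_k$ with accumulated weight $W_k$, the sum $W_k + Val(s_k, H-k)$ equals $Val(s,H)$. The invariant is preserved because, by definition, $F^*$ at step $k$ only admits outputs $o$ whose immediate weight $wt(o)$ plus the continuation value $Val(\delta(s_k,(i,o)), H-k-1)$ attains the one-step Bellman maximum, so any non-deterministic choice stays on an optimal trajectory. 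At $k = H$ the invariant gives $W_H = Val(s,H) = VMAX(s,ii)$, so every run under $\pi^*$ hits the maximum value. Taking the minimum yields $VMIN^{\pi^*}(s,ii) = VMAX(s,ii)$, and averaging over $ii$ gives $ValAvgMin^{\pi^*}(s) = ValAvgMax(s)$, which is the first claim.

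The second statement then follows easily. For any $H$-horizon policy $\pi$ we have $L^\pi(\A^{Arena}, ii, s) \subseteq L(\A^{Arena}, ii, s)$ by construction, so $VMIN^\pi(s,ii) \leq VMAX(s,ii)$ pointwise in $ii$; taking expectations and applying the first part gives $ValAvgMin^\pi(s) \leq ValAvgMax(s) = ValAvgMin^{\pi^*}(s)$. The main obstacle I foresee is purely notational: the excerpt writes $F^*(s,i)$ using $Val(s,H)$ rather than the continuation value $Val(\delta(s,(i,o)), H-k-1)$ at the current step, so the rule as stated is not quite Bellman-optimal and is not truly stationary. I would restate $F^*$ in the time-indexed form above and observe that, under the convention where the second argument of $Val$ denotes ``steps remaining,'' the rule is stationary in that parameter. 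Once this point is pinned down, the three ingredients --- Bellman optimality for $Val$, preservation of the optimality invariant under $F^*$, and monotonicity of $\min$ and $\max$ under set inclusion --- combine into a clean inductive proof.
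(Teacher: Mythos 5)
First, a point of reference: the paper does not actually prove this lemma --- it declares both claims to be ``well known properties from optimal control of Markov Decision Processes'' and defers to Puterman [Put94], just as it earlier omits the proof that $Val(s,H)=ValAvgMax(s)$. So you are not deviating from the paper's argument; you are supplying one where none is given. Unfortunately your two central steps fail, and they fail at exactly the point the paper glosses over: the difference between the \emph{clairvoyant} quantity $\mathbb{E}_{ii}[VMAX(s,ii)]$ (max over output runs taken per fixed, fully known input sequence) and the \emph{online} Bellman value $Val(s,H)$ (max over the current output, expectation over future inputs). In your first step, the claimed identity $Val(s,p)=\mathbb{E}_{ii}[VMAX_p(s,ii)]$ requires interchanging $\max_o$ with the expectation over the \emph{future} inputs $ii'$, since $VMAX_{p+1}(s,i\cdot ii')=\max_o\{wt(o)+VMAX_p(\delta(s,(i,o)),ii')\}$; but $\mathbb{E}_{ii'}[\max_o(\cdot)]\ge\max_o\mathbb{E}_{ii'}[(\cdot)]$ is only an inequality, with equality only when one output is optimal for (almost) every input suffix. ``Max distributes over the sum'' handles the immediate weight $wt(o)$ but not this interchange, which is the actual crux.

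Your main step is worse off: the invariant $W_k+Val(s_k,H-k)=Val(s,H)$ mixes a \emph{realized} accumulated weight $W_k$ (which depends on the particular inputs seen so far) with \emph{expected} continuation values, so it cannot be preserved pointwise along an individual input realization --- preservation would need $wt(o)+Val(\delta(s_k,(i,o)),H-k-1)=Val(s_k,H-k)$ for the realized $i$, whereas $Val(s_k,H-k)$ averages over $i$. Consequently the conclusion that every $F^*$-run attains $VMAX(s,ii)$ for each $ii$ is false. A two-step counterexample: at the first step two outputs $a,b$ are both Bellman-optimal (each has expected continuation $1/2$) and lead to states $A,B$; from $A$ the second-step weight is $1$ on input $x$ and $0$ on input $y$, from $B$ it is the reverse. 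Then for the input sequence ending in $x$ the run through $B$ scores $0$ while $VMAX=1$, so $VMIN^{\pi^*}(s,ii)<VMAX(s,ii)$, and indeed $\mathbb{E}_{ii}[VMIN^{\pi^*}]=0<\tfrac12=Val(s,2)<1=\mathbb{E}_{ii}[VMAX]$. What \emph{can} be proved by your style of induction is the standard MDP fact: with the time-indexed greedy rule (your corrected $F^*$, a good catch regarding the $Val(s,H)$ typo), \emph{any} history-dependent resolution of the remaining nondeterminism has expected total exactly $Val(s,H)$, and no online policy exceeds it. Turning that into the lemma as literally stated would require either reinterpreting $ValAvgMax$ and $ValAvgMin^{\pi^*}$ as such online expected values, or an additional argument that the per-$ii$ min and max collapse --- which, as the example shows, they do not in general. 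Your derivation of the second claim from the first is fine, but it inherits the gap in the first.
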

We omit the proof of these well known properties from optimal control of Markov Decision Processes (see \cite{Put94}).

Supervisor $\A^{Arena}$ is pruned to retain only the transitions with the  outputs in set $F^*(s,i)$ (as these are all equally optimal). 
This gives us  {\em Maximally permissive $H$-Optimal sub-supervisor of $\A^{Arena}$ w.r.t. $D$}.
This supervisor is denoted by $\GODSC(\A^{Arena},H)$ or 
equivalently $\GODSC(S,D,H)$.
The following proposition follows 
immediately from the construction of $\GODSC(S,D,H)$ and Lemma \ref{lemma:godsccorrect1}.  
\begin{proposition}
\label{prop:godsccorrect2}
\begin{enumerate}
 \item $S \leq_{det} \GODSC(S,D,H)$, for all $H$.
 \item $\GODSC(S,D,H)$ is maximally permissive $H$-optimal sub-supervisor of $S$.
 \item If $\GODSC(S,D,H) \leq_{det} S'$ then $S'$ is $H$-Optimal. \qed 
\end{enumerate}
\end{proposition}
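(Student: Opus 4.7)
\textbf{Proof plan for Proposition \ref{prop:godsccorrect2}.} The plan is to handle the three parts in order, leaning on the value-iteration identity $Val(s,H) = ValAvgMax(s)$ and on Lemma \ref{lemma:godsccorrect1}, and then bootstrapping to the maximal-permissiveness statement via a simple ``exchange'' argument. A preliminary observation I would establish once and reuse is that $\A^{Arena}$ is essentially $S$ extended with a witness output $w$ whose value is determined by the current state, so $L(\A^{Arena})\!\downarrow\!(I\cup O)=L(S)$; consequently, pruning transitions in $\A^{Arena}$ and projecting out $w$ corresponds exactly to refining $S$ in the determinism order.

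For part (1), I would argue that $\GODSC(S,D,H)$ is obtained from $\A^{Arena}$ by retaining, for each reachable state $s$ and input $i$, precisely the outputs in $F^*(s,i)\subseteq LegalOutputs(s,i)$. Since $S$ is non-blocking, $LegalOutputs(s,i)$ is nonempty, hence the $argmax$ defining $F^*(s,i)$ is over a nonempty finite set and so $F^*(s,i)\neq\emptyset$; therefore the pruned automaton remains non-blocking and is a supervisor. Because every retained transition is also a transition of $\A^{Arena}$, we have $L(\GODSC(S,D,H))\subseteq L(\A^{Arena})$ and, after projecting out $w$, $L(\GODSC(S,D,H))\!\downarrow\!(I\cup O)\subseteq L(S)$, which is exactly $S\leq_{det}\GODSC(S,D,H)$.

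For part (2), I would split into two claims. \emph{Optimality}: the stationary policy $\pi^*$ induced by $F^*$ is $H$-optimal by Lemma \ref{lemma:godsccorrect1}, so $ValAvgMin^{\pi^*}(s)=ValAvgMax(s)$ for every state $s$; since $\GODSC(S,D,H)$ is exactly the supervisor whose non-deterministic selection rule at each $(s,i)$ equals $F^*(s,i)$, it realises this optimum. \emph{Maximal permissiveness}: suppose $S''$ is any $H$-optimal sub-supervisor of $S$, and for contradiction $S''$ admits some reachable transition $(s,i,o)$ with $o\notin F^*(s,i)$. By definition of $argmax$, the choice $o$ at $(s,i)$ yields strictly smaller $wt(o)+Val(\delta(s,(i,o)),H-1)$ than the optimal outputs; plugging this into the Bellman recursion and into the averaging defining $ValAvgMin^{\pi''}(s)$ for the policy $\pi''$ of $S''$ produces a strictly smaller expected value than $ValAvgMax(s)$, contradicting $H$-optimality of $S''$. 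Hence every transition of $S''$ lies in $\GODSC(S,D,H)$, i.e.\ $\GODSC(S,D,H)\leq_{det}S''$.

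For part (3), assume $\GODSC(S,D,H)\leq_{det}S'$. Then every transition of $S'$ is already a transition of $\GODSC(S,D,H)$, hence uses only outputs in $F^*(s,i)$, all of which achieve the same optimal Bellman value $Val(s,H)$ by construction. A short induction on the horizon (using the Bellman equation as in the proof of Lemma \ref{lemma:godsccorrect1}) then shows $ValAvgMin^{\pi'}(s)=ValAvgMax(s)$ for the policy $\pi'$ of $S'$, so $S'$ is $H$-optimal. The main obstacle I expect is cleanly formalising ``policy induced by a supervisor'' and the fact that $ValAvgMin$ under the non-deterministic rule $F^*$ equals $ValAvgMax$ even though $F^*$ may offer several equally optimal outputs at the same $(s,i)$; this is resolved by observing that $F^*$ selects only outputs attaining the argmax, so the minimum over runs and the maximum over runs coincide and agree with $Val(s,H)$, exactly the content borrowed from the MDP theory underlying Lemma \ref{lemma:godsccorrect1}.
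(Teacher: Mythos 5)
Your proposal is correct and takes essentially the same route as the paper, which gives no detailed argument at all and simply asserts that the proposition ``follows immediately from the construction of $\GODSC(S,D,H)$ and Lemma \ref{lemma:godsccorrect1}''. Your elaboration --- non-emptiness of $F^*(s,i)$ so pruning preserves non-blockingness and yields language inclusion for item (1), the argmax/exchange argument for maximal permissiveness in item (2), and the Bellman-recursion induction showing any sub-supervisor of $\GODSC(S,D,H)$ attains $Val(s,H)$ for item (3) --- is precisely the routine content behind that one-line justification.
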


\oomit{
The following lemma states some of its properties. Note that every sub-supervisor $S'$ of $S$ defines a selection
rule and hence a stationary policy. Hence we can, with mild abuse of notation, use  the term $ValAvgMin^{S_2}(s)$.
\begin{lemma}
Let $GD=\GODSC(\A^{Arena},H)$ and let $S_1$ be any sub-supervisor of $\A^{Arena}$ and $S_2$ be any sub-supervisor of $GD$. Then,
\begin{itemize}
 \item For all states $s$ of $\A^{Arena}$, we have, $ValAvgMin^{S_1}(s) \leq ValAvgMin^{GD}(s)$.
 \item For all states $s$ of $\A^{Arena}$, we have $ValAvgMin^{S_2}(s) = ValAvgMin^{GD}(s)$.
 \item If for all states $s$ of $\A^{Arena}$, we have  $ValAvgMin^{S_1}(s) \geq ValAvgMin^{GD}_H(s)$, then
 $S_1$ is a sub-supervisor of $GD$. (Hence, $GD$ is maximally permissive.)
\end{itemize}
\end{lemma}
}

\subsection{From Supervisor to Controller}
\label{sec:determinization}
A controller $Cnt$ can be obtained from a supervisor $S$ by resolving output non-determinism in $S$. We give a rather straightforward
mechanism for this. We allow the user to specify an ordering $Ord$ on the set of output variables $2^O$. A given supervisor $S$ is determinized by retaining only the highest ordered output among those permitted by $S$. This is denoted  $Det_{Ord}(S)$. The output ordering is specified by giving a lexicographically ordered list of output variable literals. This facility is used to determinize $\GODSC$ and $\MPNC$ supervisors as required.
\vspace{-2ex}
\begin{example}
For a supervisor $S$ over variables $(I,\{o_1,o_2\})$, an example output order can be given as lexicographically ordered list ($o_1 > ~!o_2$).  Then, for any transition the determinization step will try to select the highest ordered output (which is allowed by $S$) from the list \{($o_1=true, o_2=false$), ($o_1=true, o_2=true$), ($o_1=false, o_2=false$), ($o_1=false, o_2=true$)\}. \qed
\end{example}

\subsection{\DCSYNTH Specification and Controller Synthesis}
\label{sec:synthesisAlgo}
A {\bf \DCSYNTH specification} is a tuple 
$(I,O,D^h, D^s)$, where $I$ and $O$ are the set of {\em input} and {\em output} variables, respectively. Formula $D^h$ called the \emph{hard requirement} and formula $D^s$ called the \emph{soft requirement} are \qddc\ formulas over the set of propositions $PV=I \cup O$. Let $H$ be a natural number called Horizon.
The  objective in \DCSYNTH\/ is to synthesize a deterministic controller which 
(a) {\em invariantly} satisfies the hard requirement $D^h$, and 
(b)  it is $H$ Optimal w.r.t. $D^s$ amongst all the controllers satisfying (a).

Given a specification $(I,O,D^h,D^s)$, a horizon value $H$ (a natural number) and a total ordering $Ord$ on the set of outputs $2^O$, the controller synthesis in \DCSYNTH\/ can be given as Algorithm \ref{algo:synthesis}.
\oomit{
gives the outline of synthesis method. We first computes supervisor $\MPNC(D^h)$ denoted by $\A^{mpnc}$ as given in section \ref{sec:MPNCConstruction}. The \DCSYNTH specification $(I,O,D^h, D^s)$  is said to be realizable iff $\MPNC(D^h)$ is realizable (i.e. it exist with start state as accepting state). If specification is not realizable then algorithm terminates. Otherwise, 
in the next step a sub-supervisor of $\MPNC(D^h)$ which satisfies $D^s$ for ``as many inputs as possible'' is computed by using notion of $H$-optimality w.r.t. the soft requirement $D^s$ as given in section \ref{sec:HoptimalComputation} to get $A^{godsc}$. We then determinize the $A^{godsc}$ using default output ordering as given in section \ref{sec:determinization} to get a controller $Cnt$.
}
\vspace{-1ex}
\begin{algorithm}
\label{algo:synthesis}
\textbf{ControllerSynthesis} \\
\textbf{Input}: $S=( I,O,D^h,D^s)$. Horizon $H$, Output ordering $Ord$ \\
\textbf{Output}: Controller Cnt for S. \\[1ex]
1. $\A^{\mpnc}$ = $\MPNC(D^h)$ \\
2. $\A^{\godsc}$ = $\GODSC(\A^{\mpnc}, D^s, H)$ \\
3. $Cnt$ = $Det_{ord}(\A^{\godsc})$. \\
4. Encode the automaton $Cnt$ in an implementation language.
\end{algorithm}
Step 1 uses the \MPNC construction given in Section \ref{sec:MPNCConstruction}.
Step 2 uses the \GODSC construction given in Section \ref{sec:HoptimalComputation}  whereas
Step 3 uses the determinization method of Section \ref{sec:determinization}.

\oomit{
\begin{enumerate}

 \item Language equivalent DFA $A(D^h)$ and $A(D^s)$  are constructed for formulas $D^h$ and $D^s$. The indicating monitor $A^{Ind}(D^s,w)$ converts the soft requirement DFA $A(D^s)$ into a Mealy machine with same
 states and transitions as $A(D^s)$ but with output $w$ where each transition sets $w=1$ iff its target state is an accepting state of $A(D^s)$. See 
 Appendix \ref{section:supervisor} for its construction.

 \item The maximally permissive supervisor $\MPNC(D^h)$ is constructed  by computing a greatest fixed point over the automaton $A(D^h)=\langle S,2^{I\union O},\delta,F\rangle$ using the standard safety synthesis algorithm \cite{GTW02}. We first compute the \emph{largest} set of winning states $G \subseteq F$ with the following property: $s \in G$ iff $\forall i\exists o:\delta(s,(i,o)) \in G$. Let 
$Cpre(A(D^h),X)= \{s ~\mid~ \forall i \exists o:\delta(s,(i,o)) \in X \}$. 
Then algorithm $\mathit{ComputeWINNING(A(D^h),I,O)}$ iteratively computes $G$ as follows: \\
\hspace*{1cm} G=F; {\em do} G1=G; G=Cpre($A^(D^h)$,G1) {\em while} (G != G1); 

If initial state $s \notin G$, then the specification is \emph{unrealizable}. 
Otherwise, $\MPNC(D^h)$ is obtained by making $G$ the set of final states, retaining
all the transitions in $A(D^h)$ between states in $G$ and redirecting the remaining
transitions of $A(D^h)$ to a unique reject state $r$ which is made a sink state. 

\item The product $\A^{Arena} = \MPNC(D^h) \times A^{Ind}(D^s, w)$ gives the supervisor on which $H$-optimal controller synthesis is carried out, for a given $H$, using the well-known value-iteration algorithm of Bellman \cite{Bel57}. In this algorithm a function $Val(s,p)$  is computed iteratively to assign a value to each state $s$ of $\A^{Arena}$ automaton. Here $0 \leq p \leq H$ denotes the iteration number. Constant $0 \leq 
\gamma \leq 1$ is the discounting factor which can be taken as $\gamma=1$ in this paper for simplicity.
For $o \in 2^{(O \cup \{w\})}$ let $wt(o)=1$ if $w\in o$ and $0$ otherwise.
\[
 \begin{array}{l}
 Val(s,0) = 0 \\
 Val(s, p+1) = E_{i \in 2^I}~~ max_{o \in 2^{(O \cup \{w\})} ~:~ \delta(s,(i,o)) \neq r}~  \\
 \hspace*{4cm} \{ wt(o) ~+~ \gamma \cdot Val(\delta(s,(i,o)), p) \} 
\end{array}
\]
Having computed  $Val(s,H)$, the set of  $H$-optimal outputs $O_{max}$  is obtained as follows: 
For each state $s \in A^{Arena}$ and each input $i\in 2^I$, 
\[
 \begin{array}{l}
 O_{max} = \{ o ~\mid~ o=argmax_{o \in 2^O} \{ wt(o) ~+~ \gamma \cdot val(s,H) \\
 \hspace*{2cm} ~\mid~  \delta_{\A^{Arena}}(s,i,o) = s' \land s'\not=r \} 
\end{array}
\]
Note that $O_{max}$ is a set as more than one output $o$ may satisfy the $argmax$ condition.
Now, supervisor $\A^{Arena}$ is pruned by to retain only the transitions with  optimal outputs in set $O_{max}$. This gives us  {\em Maximally permissive $H$-Optimal supervisor} for $D^s$. 
The computation of this supervisor is denoted by $\GODSC(\A^{Arena},H)$. This supervisor is denoted by $\GODSC(D^h,D^s,H)$.
\oomit{
\item The non-deterministic choice of outputs in above $\GODSC$ is resolved in favour of highest
ordered output under the ordering $<_{ord}$.  This gives us the final deterministic controller $Cnt$. 
}

\item Note that $\GODSC(D^h,D^s,H)$ itself can be non-deterministic as there may be more than
one choice of output which is $H$-optimal. All of these are retained in $\GODSC$, giving the maximally permissive $H$-optimal sub-supervisor of $\MPNC(D^h)$ w.r.t. $D^s$. Any controller obtained by arbitrarily resolving the non-determinism in $\GODSC(D^h,D^s,H)$ is also $H$-optimal.
\item Finally the controller $Det_{Ord}(\GODSC(D^h, D^s, H))$ denoted by $Cnt$ can be encoded in any target language.
We provide the encoding to LUSTRE/SCADE or NuSMV,
which allows us to do simulation and model checking on the generated controller 
\end{enumerate}
}
\vspace{-1ex}
\begin{proposition}
The controller $Cnt$ output by Algorithm  \ref{algo:synthesis} invariantly satisfies $D^h$, and it intermittently, but $H$-optimally, satisfies $D^s$.
\vspace{-1ex}
\end{proposition}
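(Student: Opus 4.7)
The plan is to chain the correctness statements of the three synthesis stages (\MPNC{}, \GODSC{}, and $Det_{Ord}$) and show that each stage preserves the invariants established by the previous one. Concretely, I want to trace the inclusions
\[
L(Cnt) \;\subseteq\; L(\A^{\godsc}) \;\subseteq\; L(\A^{\mpnc}) \;\subseteq\; L(D^h)
\]
and then verify that $H$-optimality, once established at $\A^{\godsc}$, is not destroyed by determinization.

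First I would handle the hard-requirement part. By Proposition \ref{prop:mpnccorrect}, Step 1 produces $\A^{\mpnc}=\MPNC(D^h)$ which (i) is non-blocking and (ii) satisfies $L(\A^{\mpnc})\subseteq L(D^h)$, so $\A^{\mpnc}~\mbox{\textbf{realizes}}~\invariant~D^h$. Step 2 applies the \GODSC{} construction, and Proposition \ref{prop:godsccorrect2}(1) gives $\A^{\mpnc}\leq_{det}\A^{\godsc}$, i.e.\ $L(\A^{\godsc})\subseteq L(\A^{\mpnc})\subseteq L(D^h)$, and $\A^{\godsc}$ remains non-blocking (it is a supervisor by construction). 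Step 3 determinizes via $Det_{Ord}$; since this step only prunes transitions permitted by $\A^{\godsc}$, we have $L(Cnt)\subseteq L(\A^{\godsc})$. The key sub-claim here is that $Det_{Ord}$ yields a non-blocking (total deterministic) Mealy machine: for every reachable state $q$ of $\A^{\godsc}$ and every $i\in 2^I$, the set $LegalOutputs(q,i)$ is non-empty because $\A^{\godsc}$ is a supervisor, so the lexicographically highest element of this set under $Ord$ is well defined. Composing the inclusions, $Cnt$ invariantly satisfies $D^h$.

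Next I would handle $H$-optimality for $D^s$. By the construction of $\A^{\godsc}=\GODSC(\A^{\mpnc},D^s,H)$ together with Proposition \ref{prop:godsccorrect2}(2), $\A^{\godsc}$ is the maximally permissive $H$-optimal sub-supervisor of $\A^{\mpnc}$ with respect to $D^s$; equivalently, every transition retained in $\A^{\godsc}$ realises the stationary selection rule $F^*$ from Section \ref{sec:HoptimalComputation}, and Lemma \ref{lemma:godsccorrect1} gives $ValAvgMin^{\pi^*}(s)=ValAvgMax(s)$ at every state $s$. Now $Cnt=Det_{Ord}(\A^{\godsc})$ is, by construction, a sub-supervisor of $\A^{\godsc}$; applying Proposition \ref{prop:godsccorrect2}(3) with $S'=Cnt$ yields that $Cnt$ is itself $H$-optimal. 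Since $Cnt$ is additionally deterministic, this is precisely the claim that $Cnt$ is $H$-optimal among all deterministic controllers that invariantly satisfy $D^h$.

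The main obstacle I anticipate is the subtle point in the determinization step: one must check that selecting a single output per $(q,i)$ preserves the value $ValAvgMin^{\pi^*}$ on all reachable states. This relies on the fact that every output kept by \GODSC{} is equally $H$-optimal ($F^*$ is defined via $\mathrm{argmax}$, so any selection among $F^*(q,i)$ achieves the same expected value), so an arbitrary lexicographic tie-break via $Ord$ still implements a policy attaining $ValAvgMax$. Apart from this observation, the argument is a straightforward composition of the guarantees already established for $\MPNC$, $\GODSC$, and $Det_{Ord}$.
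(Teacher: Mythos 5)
Your proposal is correct and follows essentially the same route as the paper's proof: both chain Proposition \ref{prop:mpnccorrect} (for invariance of $D^h$ at the \MPNC stage), Proposition \ref{prop:godsccorrect2} (sub-supervisor relations, so $\A^{\godsc}$ and $Cnt$ inherit invariance), and Lemma \ref{lemma:godsccorrect1} together with Proposition \ref{prop:godsccorrect2}(3) (so $Cnt$, being a sub-supervisor of $\A^{\godsc}$, remains $H$-optimal). Your added remarks on non-blockingness of $Det_{Ord}$ and on tie-breaking among equally optimal $F^*$ outputs are details the paper leaves implicit, not a different argument.
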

\begin{proof}
By Proposition  \ref{prop:mpnccorrect}, $\A^{\mpnc}$ realizes $\invariant~ D^h$.
Then, by Proposition \ref{prop:godsccorrect2}, $\A^{\godsc}$ and $Cnt$
are sub-supervisors of  $\A^{\mpnc}$ and hence they also realize $\invariant ~D^h$. Moreover, by Lemma \ref{lemma:godsccorrect1}, we get that
$\A^{\godsc}$ is $H$-optimal w.r.t. $D^s$. Hence, by Proposition \ref{prop:godsccorrect2}, we  get that $Cnt$ which is a sub-supervisor of  $\A^{\godsc}$ is also $H$-Optimal with respect to $D^s$. \qed
\vspace{-1ex}
\end{proof}
At all stages of above synthesis, the automata/supervisors $\A(D^h)$, $\A(D^s)$, $A^{\mpnc}$ and $A^{\godsc}$ and $Cnt$ 
are all represented as semi-symbolic automata (SSDFA) using the MONA \cite{Mon01} DFA data structure. In this representation, the transition function is represented as a multi-terminal BDD. MONA DFA library provides a rich set of automata operations including product, projection, determinization and minimization over the SSDFA. 
The algorithms discussed in Sections \ref{sec:MPNCConstruction}, \ref{sec:HoptimalComputation} and \ref{sec:determinization} are implemented over SSDFA.
Moreover, these algorithms are adapted to work without actually expanding the specification automata into game graph. At each stage of computation, the automata and supervisors are aggressively minimized, which leads to significant improvement in the scalability and computation time of the tool.  Appendix \ref{sec:ssdfa} gives the details of SSDFA data structure and its use in symbolic computation of supervisors in efficient manner.

\oomit{

\section{SSDFA representation} 
\label{sec:ssdfa1}
An interesting representation for total and deterministic finite state automata 
was introduced and implemented by Klarlund {\em et~al} in the tool MONA\cite{Mon01}. 
It was used to efficiently compute formula automaton for MSO over finite words. 
We denote this representation as {\em Semi-Symbolic DFA} (SSDFA). 
In this representation, the transition function is 
encoded as {\em multi-terminal BDD} (MTBDD).
The reader may refer to original papers \cite{Mon01,Mon02} for further details of MTBDD and 
the MONA DFA library. 

Here, we briefly describe the  SSDFA representation used in our tool \DCSYNTH.
Figure \ref{fig:ssdfa1}(a) gives an explicit DFA. 
Its alphabet $\Sigma$ is 4-bit vectors giving value of propositions $(r_1,r_2,a_1,a_2)$ 
and set of states $S=\{1,2,3,4\}$. 
Being a safety automaton it has a unique reject state 4 and 
all the missing transitions are directed to it. 
(State 4 and transitions to it are omitted in Figure~\ref{fig:ssdfa1}(a) for brevity.)
\noindent \begin{figure}[h]
\begin{minipage}{2in}
\includegraphics[scale=.3,keepaspectratio]{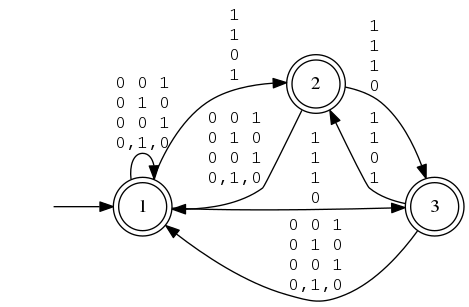}
\end{minipage} \ \ 
\begin{minipage}{3in}
\includegraphics[scale=.25,keepaspectratio]{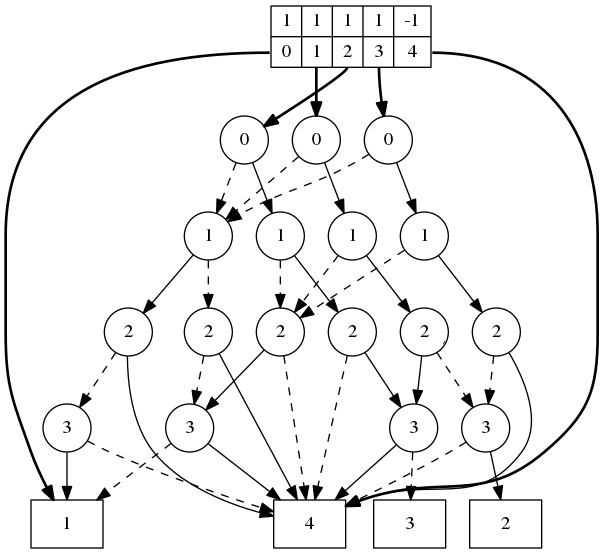}
\end{minipage}
\caption{$A^{mpnc}$ for $Arb^{hard}(2,2)$ 
(a): External format  (b): SSDFA format}
\label{fig:ssdfa1} 
\end{figure}

Figure \ref{fig:ssdfa1}(b) gives the SSDFA for the above automaton. 
Note that states are explicitly listed in the array at top and 
final states are marked as $1$ and non-final states marked as $-1$. 
(For technical reasons there is an
additional state $0$ which may be ignored here and state 1 may be treated as the initial state). 
Each state $s$ points to
shared MTBDD node encoding the transition function $\delta(s):\Sigma \rightarrow S$ 
with each path ending in the next state. 
Each circular node of MTBDD represents a {\em decision node} 
with indices $0,1,2,3$ denoting variables $r_1,r_2,a_1,a_2$. Solid
edges lead to true co-factors and dotted edges to false co-factors. 

MONA provides a DFA library implementing automata 
operations.
Moreover, automata may be constructed
from scratch by giving list of states and adding transitions one at a time. 
A default transition must be given to make the automaton total. 
Tools MONA and DCVALID use eager minimization while converting formula into SSDFA.


We use SSDFA to efficiently synthesize the \MPNC and \GODSC for the \DCSYNTH specification  $(I,O,D^h, D^s)$, without actually expanding the specification automata into game graph (See Appendix \ref{sec:ssdfa}). 
The use of SSDFA leads to significant improvement in the scalability and computation time of the tool.
}

\vspace{-1ex}
\section{Case Studies and Experiments}
\label{section:motivation}
\vspace{-1ex}
For a \DCSYNTH specification, $D^h$ and $D^s$ can be any \qddc formulas. While invariance of $D^h$ is guaranteed by the synthesis algorithm, the quality of the controller is controlled by optimizing the outputs for which the soft requirement $D^s$ holds. For example, $D^s$ may specify outputs which save energy, giving an energy efficient controller.
The soft requirement can also be used to improve the robustness \cite{BCGHHJKK14} of the controller (see \cite{WPM17}). Below, we consider specifications structured as assumptions and commitments, and their optimized robustness using our soft requirement guided synthesis.
\vspace{-1ex}
\subsection{Types of Controller Specification}
\label{sec:controllertype}

For many examples, the controller specification can be given as a pair $(A,C)$ of
\qddc\/ formulas over input-output variables $(I,O)$. Here, {\bf commitment} $C$
is a formula specifying the desired behaviour which must ideally hold invariantly.
But this may be unrealizable, and a suitable {\bf assumption} $A$ on the behaviour of environment may have to be made for $C$ to hold. In case the assumption $A$ does not hold,
it is still desirable that controller satisfies $C$,  intermittently but ``as much as
possible''.  Given this assumption-commitment pair $(A,C)$,  we specify four types of
derived controller specifications $(I,O,D^h,D^s)$ as follows.
\vspace*{-0.2cm}
\begin{center}
{\scriptsize
\begin{tabular}{|c|c|c|}
\hline
\textbf{Type} & \textbf{Hard Requirement} $D^h$ & \textbf{Soft Requirement} $D^s$ \\
\hline
\textit{Type0} & $C$ & $true$ \\
\hline
\textit{Type1} & $(A \Rightarrow C)$ & $true$ \\ 
\hline
\textit{Type2} & $true$ & $C$ \\
\hline
\textit{Type3} & $(A \Rightarrow C)$ & C \\
\hline
\end{tabular}
}
\end{center}
\textit{Type0} controller gives the best guarantee but it may be unrealizable. \textit{Type1} controller
provides a firm but conditional guarantee. \textit{Type2} controller tries to achieve 
$C$ in $H$-optimal fashion irrespective of any assumption and \textit{Type3} Controller provides firm conditional guarantee and it also tries to satisfy $C$ in $H$-optimal fashion even when the assumption does not hold.

\vspace*{-0.3cm}
\subsection{Performance Metrics: Measuring quality of controllers}
\label{sec:performanceMetrics}
For the same assumption commitment pair $(A,C)$, we can synthesize diverse controllers
using different specification types, horizon values and output orderings.
In order to compare the performance of these different controllers,  
we define two metrics -- i) \textit{Expected Case Performance measure} to compare average case behaviour, 
and ii) \textit{Must Dominance} to compare the guaranteed behaviour. 
\vspace*{-0.2cm}
\paragraph{i) Expected Case Performance:}
Given a controller $Cnt$ over input-output alphabet $(I,O)$ and a \qddc\/ formula (regular property) $C$ over variables $I \cup O$, we can construct a {\em Discrete Time Markov Chain (DTMC)}, denoted $M_{unif}(Cnt,C)$, whose analysis allows us to measure the probability of $C$ holding in long runs (steady state) of $Cnt$ under random  independent and identically distributed(iid) inputs. This value is designated as $\mathbb{E}_{unif}(Cnt,C)$. The construction of the desired DTMC 
is as follows. The product $Cnt \times \A(C)$ gives a finite state automaton with the same behaviours as $Cnt$. Moreover, it is in accepting state exactly when $C$ holds for the past behaviour. (Here $\A(C)$ works as a total deterministic monitor automaton for $C$ without restricting $Cnt$). By assigning uniform discrete probabilities to all the inputs from any state, we obtain the DTMC $M_{unif}(Cnt,C)$ along with a designated set of accepting states, such that the DTMC is in accepting state precisely when $C$ holds.
Standard techniques from Markov chain analysis allow us to compute the {\em probability} (Expected value) of being in the set of accepting states on long runs (steady state) of the DTMC. This gives us the desired value $\mathbb{E}_{unif}(Cnt,C)$.
A leading probabilistic model checking tool MRMC implements this computation \cite{KZHHJ11}.
In \DCSYNTH, we provide a facility to compute $M_{unif}(Cnt,C)$ in a format accepted by the tool MRMC. Hence, using \DCSYNTH\/ and  MRMC, we are able to compute  $\mathbb{E}_{unif}(Cnt,C)$.
\vspace*{-1ex}
\paragraph{ii) Guaranteed Performance as Must-Dominance:}
Consider two supervisors $S_1$, $S_2$ and a regular property $C$. Define that $S_i$ guarantees $C$ for an input sequence $ii$, provided for every output
sequence $oo \in S_i[ii]$ produced by $S_i$ on $ii$ we have  that $(ii,oo)$ satisfies $C$. 
We say that $S_2$  \emph{must dominance} $S_1$ with respect to the property $C$ provided for every input sequence $ii$, if $S_1$ guarantees $C$ then
$S_2$ also guarantees $C$. Thus, $S_2$ provides a superior must guarantee of $C$ than $S_1$.
%
\begin{definition}[Must Dominance] Given two supervisors $S_1, S_2$ and a property (formula) 
$C$ over input-output alphabet $(I,O)$, 
the must dominance of $S_2$ over $S_1$ is defined as 
$S_1 \leq_{dom}^C S_2  \quad \mbox{iff} \quad$ 
$MustInp(S_1,C) ~\subseteq~ MustInp(S_2,C)
$, 
where $MustInp(S_i,C) = \{ ii \in (2^I)^+ \mid \forall oo \in (2^O)^+. ( (ii,oo) \in L(S_i) \Rightarrow (ii,oo) \models C \}$.  \qed
\vspace{-1ex}
\end{definition}
\oomit{
To compare the behaviours of any two given supervisors guaranteed in-terms of its input sequences that satisfy a given regular property, the \emph{must dominance} measure can be used. 
}
We establish \emph{must dominance} relations among \GODSC supervisors of various types of specifications discussed in Section \ref{sec:controllertype}. 
\vspace{-1ex}
\begin{lemma}
\label{lemma:mustdom}
For any \qddc formulas $A$ and $C$, and any horizon $H$, the following must dominance relations will hold (for any given $H$)
\vspace{-1ex}
\begin{enumerate}
{\scriptsize
    \item $\GODSC_1(A,C))$ $ \leq_{dom}^C \GODSC_3(A,C)) \leq_{dom}^C \GODSC_0(A,C))$ 
    \item $\GODSC_2(A,C)) \leq_{dom}^C \GODSC_0(A,C))$
}
\end{enumerate}
where, $\GODSC_i(A,C)$ denote the maximally permissive $H$-optimal supervisor $\A^{\GODSC}$ of Algorithm \ref{algo:synthesis} for the specification $Type_i(A,C)$. 
\vspace{-1ex}
\end{lemma}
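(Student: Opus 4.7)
My plan is to reduce both items of the lemma to a single monotonicity principle together with two easy structural facts about the \GODSC construction. The principle is: \emph{if $S_2$ is a sub-supervisor of $S_1$ then $MustInp(S_1,C) \subseteq MustInp(S_2,C)$ for any property $C$.} This follows at once from the definition of sub-supervisor, since $L(S_2) \subseteq L(S_1)$ implies $S_2[ii] \subseteq S_1[ii]$ for every input sequence $ii$, so any $oo \in S_2[ii]$ is already ``tested'' by $S_1$ and must satisfy $C$ whenever $ii \in MustInp(S_1,C)$.

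Next, I would observe that when the soft requirement is $true$, the \GODSC step is a no-op. Indeed, the indicator $w$ is constantly $1$, so every transition of $\A^{Arena}$ has weight $1$; the recursion in Section~\ref{sec:HoptimalComputation} gives $Val(s,p)=p$ for every non-reject state $s$, and therefore the $argmax$ retains every legal output. Hence, up to language equivalence, $\GODSC_0(A,C) = \MPNC(C)$ and $\GODSC_1(A,C) = \MPNC(A \Rightarrow C)$.

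With these ingredients the two inequalities fall out quickly. For $\GODSC_0$: since $L(\MPNC(C)) \subseteq L(C)$ and the language of a Mealy machine is prefix-closed, every behaviour of $\MPNC(C)$ satisfies $C$ at every prefix, hence in particular at its last position. Because $\MPNC(C)$ is non-blocking, for every input sequence $ii$ the set $\GODSC_0[ii]$ is non-empty and every $(ii,oo)\in L(\GODSC_0)$ satisfies $C$. Therefore $MustInp(\GODSC_0(A,C),C) = (2^I)^+$, which trivially gives $\GODSC_2 \leq_{dom}^C \GODSC_0$ and $\GODSC_3 \leq_{dom}^C \GODSC_0$, establishing the last inequality of (1) and all of (2). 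For the remaining inequality $\GODSC_1 \leq_{dom}^C \GODSC_3$, note that by Proposition~\ref{prop:godsccorrect2}(1) the supervisor $\GODSC_3(A,C)$ is a sub-supervisor of $\MPNC(A \Rightarrow C)$, which by the no-op observation equals $\GODSC_1(A,C)$; applying the monotonicity principle concludes the argument.

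The only place any care is needed is the no-op observation for $D^s = true$: one must verify that the $argmax$ truly selects every legal output and that the subsequent pruning of $\A^{Arena}$ returns a machine language-equivalent to the original supervisor. Everything else is purely structural, relying on prefix-closedness of Mealy-machine languages and the definition of $MustInp$.
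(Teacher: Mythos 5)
Your proposal is correct and follows essentially the same route as the paper's proof: establish $MustInp(\GODSC_0(A,C),C)=(2^I)^+$ to get both dominances over $\GODSC_0$, observe that a $true$ soft requirement makes the $H$-optimal step a no-op so $\GODSC_1(A,C)=\MPNC(A\Rightarrow C)$, and then use Proposition~\ref{prop:godsccorrect2} to see $\GODSC_3(A,C)$ is a sub-supervisor of $\GODSC_1(A,C)$. The only difference is that you state explicitly the monotonicity fact (sub-supervisor implies inclusion of $MustInp$ sets) and verify the no-op claim via $Val(s,p)=p$, both of which the paper leaves implicit.
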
 
\begin{proof}
By definition, {\scriptsize $\GODSC_0(A,C)$} invariantly satisfies $C$ for all input sequences. Hence,
{\scriptsize $MustInp(\GODSC_0(A,C),C)= (2^I)^*$}, which immediately gives us that {\scriptsize $S \leq_{dom}^C \GODSC_0(A,C))$} for any supervisor $S$.

Now we prove the remaining relation {\scriptsize $\GODSC_1(A,C))$ $ \leq_{dom}^C \GODSC_3(A,C))$}.
Let {\scriptsize $S=\MPNC(A \Rightarrow C)$}. Then, {\scriptsize $\GODSC_1(A,C))=\GODSC(S,true,H)=S$}. The
second equality holds as soft requirement $true$ does not cause any pruning of outputs
in $H$-optimal computation. By definition
{\scriptsize $\GODSC_3(A,C)=\GODSC(S,C,H)$}. By Proposition \ref{prop:godsccorrect2}, {\scriptsize $S \leq_{det} \GODSC(S,C,H)$}
which gives us the required result. \qed
\end{proof}
Note that in general, {\scriptsize $\GODSC_2(A,C)$} is theoretically incomparable with {\scriptsize $\GODSC_1(A,C)$} and  {\scriptsize $\GODSC_3(A,C)$}, as  {\scriptsize $\GODSC_2(A,C)$} is a supervisor that does not have to meet any hard requirement, but it optimally meets the soft requirements irrespective of the assumption. However, for specific $(A,C)$ instances, some additional must-dominance relations may hold between {\scriptsize $\GODSC_2(A,C)$} and the other supervisors.

\vspace*{-0.3cm}
\subsection{Case Studies: Mine-pump and Arbiter Specifications}
\label{section:minepumpcasestudy}
We have carried out experiments with  i) the Mine-pump specification presented in this section, and ii) an Arbiter specification
given in  Appendix \ref{sec:casestudy_arbiter}.

\noindent \textbf{Mine-pump:}
The Mine-pump controller (see \cite{Pan01a}) has two input sensors: high water level sensor {\scriptsize $\mathit{HH2O}$} and methane leakage sensor {\scriptsize $\mathit{HCH4}$}; 
and one output, 
{\scriptsize $\mathit{PUMPON}$} to keep the pump on. 
The objective of the controller is to \emph{safely} operate 
the pump in such a way that  the water level never remains high continuously for more
that $w$ cycles. 
Thus, Mine-pump controller specification has input and output variables  {\scriptsize $(\{HH2O, HCH4\}, \{PUMPON\})$}.

We have following {\bf assumptions} on the \textit{mine} and the \textit{pump}. Their conjunction
is denote {\bf $MineAssume(\epsilon,\zeta,\kappa)$} with integer parameters $\epsilon,\zeta,\kappa$. Being of the form $[]D$ each formula states that the property $D$ (described in text) holds for all observation intervals in past.
\vspace{-1ex}
\begin{itemize}
\item[-] \emph{Pump capacity:} {\scriptsize $([]!(slen=\epsilon ~\&\& ~([[PUMPON ~\&\& ~HH2O]]$\verb|^|$\langle HH2O \rangle)))$}.
If the pump is continuously on for $\mathit{\epsilon}$ cycles with water level also continuously high, 
then water level will not be high at the $\mathit{\epsilon+1}$ cycle. 
\item[-] \emph{Methane release:} 
{\scriptsize $[](([HCH4]$\verb|^|$[!HCH4]$\verb|^|$\langle HCH4 \rangle ) \Rightarrow (slen> \zeta))$} and
{\scriptsize $[]([[HCH4]] \Rightarrow slen<\kappa)$}.
The minimum separation between the two leaks of methane is {\scriptsize $\mathit{\zeta}$} cycles 
and the methane leak cannot persist for more than $\mathit{\kappa}$ cycles. 
\vspace{-1ex}
\end{itemize}
\noindent The {\bf commitments} are as follows. The conjunction of commitments is denoted by {\bf $MineCommit(w)$} and they hold intermittently in absence of assumption.
\vspace{-1ex}
\begin{itemize}
\item[-] \emph{Safety conditions:} 
{\scriptsize $true$\verb|^|$\langle((HCH4 ~|| ~!HH2O) \Rightarrow ~!PUMPON))\rangle$} states that
if there is a methane leak or absence of high water in current cycle, then pump should be off in the current cycle. Formula
{\scriptsize $!(true$\verb|^|$([[HH2O]] ~\&\& ~slen = w))$} states that the water level does not  remain continuously high in last $w+1$ cycles.
\vspace{-1ex}
\end{itemize}
The Mine-Pump specification denoted by {\bf $MinePump(w,\epsilon,\zeta,\kappa)$} is given by the assumption-commitment pair
$(MineAssume(\epsilon,\zeta,\kappa), MineCommit(w))$. The four types of \DCSYNTH\/ specifications of Section \ref{sec:controllertype}
can be derived from this.  Figure  \ref{figure:minepumpspecification} in Appendix \ref{sec:exampleSources} gives the textual source of $Type3(MinePump(8,2,6,2))$
specification used by the \DCSYNTH\/ tool. 

\noindent \textbf{Arbiter:}
Due to space limitations, the detailed specification of the arbiter, briefly discussed as Example \ref{exam:arbiter} in Section \ref{section:intro}, is 
given in  Appendix \ref{sec:casestudy_arbiter}. The arbiter is denoted as $Arb(n,k,r)$, where $n$ denotes the number of clients, 
$k$ is the response time (time for which a client should keep the  request high continuously to get the guaranteed access) and $r$ is the maximum number of request that can be true simultaneously. 


\oomit{
Let  $\mathbf{mpsr1}$ denote  
$!(true \textrm{\textasciicircum} (slen=2 ~\&\& \langle \rangle \langle HCH4 \rangle)$\\
$\textrm{\textasciicircum}\langle PumpOn \rangle ~)$,
we also define the indicator variables $as1$ and $c1$,
for $ASSUME$ and $COMMIT$ formulae defined above.
These indicator variables track the validity of assumption and commitment respectively at any given point
in an infinite run.
which states that it is not the case that there is a methane leakage somewhere in the last 3 cycles and the pump is still on.
\begin{itemize}
\item \emph{MPV1}: Soft requirement $\langle PumpOn:2 \rangle$ states that try to keep {\em pump  on} as much as possible.
\item \emph{MPV2}: Soft requirement $\langle \mathit{mpsr1}:4,~~ \mathit{PumpOn}:2 \rangle$ states that try to keep {\em pump off} if there is a methane leak in the last 3 cycles; 
otherwise try to keep pump on. 
\item \emph{MPV3}: Soft requirement $\langle !PumpOn:2 \rangle$ states that try to keep {\em pump  off} as much as possible.
\item \emph{MPV4}: Soft requirement $\langle c1:2 \rangle$ states that try to meet {\em commitment} as much as possible.
\end{itemize}
}

\oomit{
We have synthesized the controllers for the specification $MinePump(8,2,6,2)$  and $MinePump^{soft}(8,2,6,2)$. The performance of the deterministic controllers (obtained by setting the default value of $\mathit{PumpOn}$ and $\mathit{PumpOff}$ respectively)  are compared using the expected value
of meeting the requirement $\mathit{REQUIREMENT}$.
}

\oomit{
Appendix \ref{sec:minepumpInputAndSimulation} gives textual
input to the tool and simulations carried out using the synthesized controllers.

It can be argued that these controllers have different quality attributes. 
For example, 
$\mathit{MPV1}$ gives rise to a controller that 
aggressively gets rid of water by keeping pump on whenever possible.
$\mathit{MPV3}$ saves power by keeping pump off as much as possible. On the other hand, 
$\mathit{MPV2}$ aggressively keeps pump on but it opts for a safer policy of not keeping pump on for two cycles even after methane is gone. 
}

\oomit{
\subsection{Synchronous Bus Arbiter}
\label{sec:caseStudyArbiter}
An $n$-cell synchronous bus arbiter has inputs $\{ req_i\}$  and outputs $\{ack_i\}$ 
where $1 \leq i \leq n$.  In any cycle, a subset of $\{req_i\}$ is true and the 
controller must set one of the corresponding $ack_i$ to true. 
The arbiter \textbf{commitment},  {\bf $ArbCommit(n,k)$}, is the conjunction of the following four properties.
\begin{small}

\begin{equation}
\label{eq:ArbProperties}
 \begin{array}{l}
  Mutex(n) \df true\textrm{\textasciicircum} \langle ~\land_{i \neq j} ~\neg ( ack_i \land ack_j) ~ \rangle \\
  NoLoss(n) \df true\textrm{\textasciicircum} \langle ~~ (\lor_i req_i) \Rightarrow (\lor_j ack_j) ~\rangle \\
  NoSpurious(n) \df true\textrm{\textasciicircum} \langle ~~\land_i ~(ack_i \Rightarrow req_i) ~~\rangle \\
Response(n,k) = (\land_{1 \leq i \leq n} ~(Resp(req_i,ack_i,k)) \quad \mbox{where} \\
 \quad Resp(req,ack,k) = 
true\textrm{\textasciicircum}(([[req]]\ \&\&\ (slen =(k-1)))~\Rightarrow \\
   \hspace*{2cm} true\textrm{\textasciicircum}(scount\ ack > 0 \ \&\&\ (slen =(k-1)))   
 \end{array}
\end{equation}

\end{small}
In \qddc\/, 
the formula 
$true\textrm{\textasciicircum} \langle P \rangle$ holds at a cycle $i$ in execution if the proposition $P$ holds
at cycle $i$. Thus for the current cycle $i$, formula $Mutex(n)$ gives mutual exclusion of acknowledgments; $NoLoss(n)$ states that
if there is at least one request then there must be  an acknowledgment; and 
$NoSpurious$(n) states that acknowledgment is only given to a requesting cell.
Formula $true\textrm{\textasciicircum}(([[req]]\ \&\&\ (slen =(k-1)))$ states that in the past of the current
point, there are at least $k$ cycles and in last $k$ cycles $req$ is invariantly true. Similarly,
the formula $true\textrm{\textasciicircum}(scount\ ack > 0 \ \&\&\ (slen =(k-1)))$  states that
in the past of the current point there are at least $k$ cycles and in last $k$ cycles the count of $ack$ is
at least 1. Then, the formula $Resp(req,ack,k)$  states that if $req$ has be continuously true in last $k$
cycles, there must be at least one $ack$ within last $k$ cycles.
So, $Response(n,k)$ (in equation \ref{eq:ArbProperties}) says that each cell requesting continuously for last $k$ cycles must get an acknowledgment within last $k$ cycles. 

A controller can invariantly satisfy $ArbCommit(n,k)$ if $n \leq k$. Tool  \DCSYNTH gives us
a concrete controller for the instance  ($D^h=ArbCommit(6,6)$, $D^s=true$).
It is easy to see that there is no controller which can invariantly satisfy $ArbCommit(n,k)$ if $k < n$.
Consider the case when all $req_i$ are continuously true. Then, it is not possible to give response to
every cell in less than $n$ cycles due to mutual exclusion of $req_i$. 

To handle such desirable but unrealizable requirement we make an assumption. Let the proposition
$Atmost(n,i)$ be defined as $\forall S \subseteq  \{1 \ldots n\}, |S| \leq i. ~~  \land_{j \notin S} \neg req_j$. It states that  at most $i$ requests are true simultaneously. Then, the {\bf arbiter assumption} is the formula
{\bf $ArbAssume(n,i)$} = \verb#[[ Atmost(n,i)]]#, which It states that $Atmost(n,i)$  holds invariantly in past.

The synchronous arbiter specification {\bf $Arb(n,k,i)$} is the assumption-commitment pair $(ArbAssume(n,i), ArbCommit(n,k))$. The four types of controller specifications can be derived from this pair.
Figure \ref{figure:arbiterspecification} in Appendix \ref{sec:toolusage}  gives, in textual syntax of tool \DCSYNTH, the specification
$TYPE3(Arb(5,3,2))$.

\oomit{

When we synthesize the controller for the \textbf{ArbCommit(n,k)} as $D^h$ i.e. (with Type 0 specification), it is realizable only when $k \geq n$. For example, the specification  $ArbCommit(5,3)$ is {\em unrealizable} as expected 
(as the 3 cycle response cannot be guaranteed for all the 5 cells when all requests are $true$ continuously). 

We now define an assumption for Arbiter example called  The \qddc specification for \textbf{ArbAssume(5,2)} is given in Appendix \ref{sec:arbiterspecification}, which says that at most 2 requests can be true at any cycle. 

For our Arbiter case study we synthesize the controllers for four standard  specifications mentioned earlier in this section. The Arbiter specification is denoted by $Arb(n,k,i)$ with $A=ArbAssume(n,i)$ and $C=ArbCommit(n,k)$.

\begin{itemize}

\item The specification $Arb^{hard}(n,k)$ is realizable only if $k \geq n$.
For example, the specification  $Arb^{hard}(5,3)$ is {\em unrealizable} as expected 
(as the 3 cycle response cannot be guaranteed for all the 5 cells). 

\item However, the 
specification $Arb^{soft}(n,k)$ is realizable even in case
where $k < n$. \DCSYNTH could synthesize a controller for $Arb^{soft}(5,3)$, which ``tries'' to give every cell acknowledgment within 3 cycles (whenever possible). 
\item An interesting variation of the arbiter working under environmental assumption called $Arb^{hardAssume}(n,k)$ is also considered, which is an arbiter similar to $Arb^{hard}(n,k)$ working under the invariant assumption 
$Assume(n,i)$, which states that in any cycle at most $i$ requests are true simultaneously.  
Such an arbiter specification is given by the Equation \ref{eq:arbHardAssume}. It does not have any soft requirement.
\begin{small}
\begin{equation}
\begin{array}{l}
 Arb^{hardAssume}(n,k,i) \df ( \{req_1, \ldots, req_n\}, \{ack_1, \ldots, ack_n\}, \\
  \qquad\qquad (Assume(n,i) \Rightarrow ARBHARD(n,k)), true) 
  \end{array}
\label{eq:arbHardAssume}
\end{equation}
\end{small}

We also consider an arbiter specification given in the Equation \ref{eq:arbHardAssumeSoft}. It tries to improve the performance of an arbiter by specifying $(Assume(n,i) \Rightarrow ARBHARD(n,k))$ as hard requirement when assumptions are meeting, and when the \emph{soft requirements} $ARBHARD(n,k)$ optimizes the requirements even when assumptions are not met. 
\begin{small}
\begin{equation}
\begin{array}{l}
 Arb^{hardAssume}_{soft}(n,k,i) \df ( \{req_1, \ldots, req_n\}, \{ack_1, \ldots, ack_n, ga\}, \\
  \qquad\qquad (Assume(n,i) \Rightarrow ARBHARD(n,k)), (ARBHARD(n,k))~) 
  \end{array}
\label{eq:arbHardAssumeSoft}
\end{equation}
\end{small}

\DCSYNTH could synthesize the specification given by Equation \ref{eq:arbHardAssume} and \ref{eq:arbHardAssumeSoft} (e.g. $Arb^{hardAssume}(5,3,2)$ and $Arb^{hardAssume}_{soft}(5,3,2)$) effectively, for $i \leq k < n$.
\oomit{
Synthesis of various  {\em robust} arbiters which function even in presence of {\em intermittent violation} of the
assumption is reported in Table~\ref{tab:robustArbiterSynthesis}. 
}
\end{itemize}
Table \ref{tab:expectedValueMeasurement} in Section \ref{sec:qualityWithExpectedValues} compares the behavior of these arbiters based on their expected values performance. 
}

}

\subsection{Experimental Evaluation}\label{sec:experiments}
\vspace{-1ex}
Given an assumption-commitment pair $(A,C)$ the four types of \DCSYNTH\/ specifications
can be derived as given in Section \ref{sec:controllertype}. Given any such specification, a horizon value $H$, and an
ordering of outputs, a controller can be synthesized using our tool \DCSYNTH as described in Section \ref{sec:synthesisAlgo}.
\oomit{ 
We synthesized the controllers for all the four types of specifications given in Section \ref{sec:controllertype} using \DCSYNTH for both mine pump and the arbiter considering i) $MinePump(8,2,6,2)$  and ii) $Arb(5,3,2)$. 
}
For the \textit{Mine-pump} instance $MinePump(8,2,6,2)$, we synthesized controllers for all the four derived specification types with horizon value $H=50$ and output ordering $PUMPON$. These controllers choose to get rid of water aggressively
by keeping the pump on whenever possible. Similarly, controllers were also synthesized with the output ordering $!PUMPON$. These controllers save energy by keeping the pump off whenever possible. Note that, in our synthesis method, hard and soft requirements are fulfilled before applying the output orderings.

For the Arbiter instance $Arb(5,3,2)$ also, controllers were synthesized
for all the four derived specification types with horizon value $H=50$ and output ordering  $ArbDef=(a_1>a_2>a_3>a_4>a_5)$. This ordering tries to give acknowledgment such that  client $i$ has priority higher than client $j$ for all $i < j$.

\begin {table}[t]
\caption {Synthesis from Mine-pump(8,2,6,2) and Arb(5,3,2) specifications in \DCSYNTH. The last column gives the expected value of commitment in long run on random inputs.} 
\label{tab:expectedValueMeasurementNew}
\vspace{-3ex}
\begin{center}
{\scriptsize
	\begin{tabular}	{|c|c|c||c|c|c||c|}
	\hline
	& 
	\multicolumn{2}{c||}{ \textbf{\DCSYNTH Specification}} 
	& \multicolumn{3}{c||}	 {\textbf{ Synthesis (States/Time)}}
	& 
	\\
	\hline
		\textbf{Sr} &
		\textbf{Controller}  & 
		\textbf{Output}  & 
		\textbf{\MPNC} & 
		\textbf{\GODSC} &
	 	\textbf{Controller} & 
		\textbf{Expected}  
		\\
		\textbf{No} & \textbf{type}
		  &
		\textbf{Ordering} &
		\textbf{Stats} &
		\textbf{Stats} &
		\textbf{Stats} &
		\textbf{Value}
		\\
	\hline
	\multicolumn{7}{|c|}{$\mathbf{Mine-pump(8,2,6,2)}$} \\
     \hline	
		1 & 
	    $Type0$  & 
	      - &
	      \multicolumn{3}{c||}{Unrealizable} &
      
	     \\
	     \hline
	    2 & 
	    $Type1$  & 
	      $PUMPON$ & 
	      70/0.00045 &
	      70/0.00254 &
	      21/0.00220 & 
		  0.0
	     \\
	     \hline
		3 & 
		$Type2$  & 
	     $PUMPON$ & 
	      1/0.00004 &
	      10/0.00545 &
	      10/0.00033 &
		  0.99805
	     \\
	     \hline
	     4 & 
	    $Type3$  & 
	     $PUMPON$ & 
	     70/0.00045 &
	     75/0.044216 &
	     73/0.00081 &
	     0.99805
	     \\
	     \hline
	     \hline
	    5 & 
	    $Type1$  & 
	     $!(PUMPON)$ & 
	     70/0.00045 &
	      70/0.00254 &
	      47/0.00230 &
	      0.0
	     \\
	     \hline
		6 & 
	    $Type2$  & 
	     $!(PUMPON)$ & 
	      1/0.00004 &
	      10/0.00545 &
	      10/0.00019 &
		  0.99805
	     \\
	     \hline
	    7 & 
	    $Type3$  & 
	     $!(PUMPON)$ & 
	     70/0.00045 &
	     75/0.044216 &
	     73/0.00082 &
	      0.99805
	     \\
	     \hline
	     \hline
	\multicolumn{7}{|c|}{$\mathbf{Arb(5,3,2)}$} \\
		\hline	
		1 & 
	    $Type0$  & 
	     - & 
	     \multicolumn{3}{c||}{Unrealizable} &
	     \\
	     \hline
	    2 & 
	    $Type1$  & 
	     $ArbDef$  & 
	      13/0.000226 &
	      13/0.004794 &
	      11/0.007048 & 
		  0.0	
	     \\
	     \hline
		3 & 
	    $Type2$  & 
	     $ArbDef$ & 
	     1/0.00001 &
	     207/1.864346 &
	      201/0.058423 &
	      0.9930985
	     \\
	     \hline	
		4 & 
	    $Type3$  & 
	     $ArbDef$ & 
	       13/0.000213   &
	      207/1.897907 &
	      201/0.057062 & 
		  0.9930985  
	     \\
	     \hline	
	\end{tabular}
}
\vspace{-5ex}
\end{center}
\end{table}
In Table \ref{tab:expectedValueMeasurementNew} we give the performance of the of tool \DCSYNTH in synthesizing these controllers. The table gives the time taken at each stage of the synthesis algorithm, and the sizes of the computed supervisors/controllers. 
The experiments were conducted on Linux (Ubuntu 16.04) system with Intel i5
64 bit, 2.5 GHz processor and 4 GB memory.
\oomit{
$\mathit{PumpOff}$ i.e. $!\mathit{PumpOn}$. 
The first controller tries to keep the $PumpOn$ whenever possible and hence it gets rid of water promptly, whereas the second controller tries to keep pump off trying to save energy whenever possible.

For \textit{Arbiter}, the default value denoted by $ArbDef=(a_1>a_2>a_3>a_4>a_5)$ is used.  It states -- try to give acknowledgment to every client 
with client $i$ having priority higher than that of client $j$ for all $i < j$. 
}

\textit{Experimental Evaluation of Expected Case Performance:} 
The last column of Table \ref{tab:expectedValueMeasurementNew} gives the expected value of commitment holding in long run for the controllers of various types for both Mine-pump and Arbiter instances. This value is computed as outlined in Section \ref{sec:performanceMetrics}. The results are quite encouraging. 

It can be observed from Table \ref{tab:expectedValueMeasurementNew} that in both the examples, the controllers for \textit{Type1} (i.e., when soft-requirements are not used) specifications have $0$ expected value of commitment $C$. This is because of the strong assumptions used in guaranteeing $C$, which themselves have expected value $0$. In such a case, whenever the assumption fails, the synthesis algorithm has no incentive to try to meet $C$.

On the other hand, with soft requirement $C$ in \textit{Type2} and \textit{Type3} specifications, the  $H$-optimal controllers have the expected value of $C$ above $99\%$. This remarkable increase in the \textit{expected value} of Commitment shows 
that $H$-optimal synthesis is  very effective in figuring out controllers which meet the desirable property $C$ as much as possible, irrespective of the assumption.

\paragraph{Experimental Evaluation of Must-Dominance:} Given supervisors $S_1, S_2$ for
an assumption-commitment pair $(A,C)$, since both $S_1,S_2$ are finite state Mealy machines and $C$ is a regular property,
an automata theoretic technique can automatically check whether 
$S_1 \leq_{dom}^C S_2$. We omit the details of this technique here, which is presented in Appendix \ref{sec:toolusage} Proposition \ref{prop:checkMustDominance}. 
This technique is implemented in our tool \DCSYNTH\/. In case $S_1 \leq_{dom}^C S_2$ does not hold, the tool provides a counter example.
 
For our case studies, we experimentally  compare must dominance of supervisors 
$\GODSC_i(A,C)$ as defined in Lemma \ref{lemma:mustdom}. Recall that $\GODSC_i(A,C)$ denotes the maximally permissive $H$-optimal supervisor for the specification $Type_i(A,C)$.
The results obtained (with $H=50$) are as follows.
\begin{enumerate}
\item Mine-pump instance $Minepump(8,2,6,2)$ denoted by $MP(8,2,6,2)$): 
{\scriptsize    \begin{itemize}
        \item [] $\GODSC_1(MP(8,2,6,2)) ~<_{dom}^C~ \GODSC_3(MP(8,2,6,2)) ~=_{dom}^C~ \GODSC_2(MP(8,2,6,2))$
    \end{itemize}
    }
\item  Arbiter instance $Arb(5,3,2)$: 
  {\scriptsize   \begin{itemize}
        \item []  $\GODSC_1(Arb(5,3,2)) ~<_{dom}^C~ \GODSC_2(Arb(5,3,2)) ~=_{dom}^C~  \GODSC_3(Arb(5,3,2))$
   \end{itemize}
   }
\end{enumerate}
$\GODSC_3$ must dominates $\GODSC_1$ as expected, as $\GODSC_3$ is a sub-supervisor of $\GODSC_1$. What is interesting and surprising is that
in both the case studies Arbiter and Mine-pump,  the $\GODSC_2$ and $\GODSC_3$ supervisors
are found to be syntactically identical. This is not theoretically guaranteed, as $Type2$ and $Type3$ supervisors are must-incomparable in general. 
Thus, in these examples,
the $H$-optimal $\GODSC_2$ already provides all the must-guarantees of the hand-crafted $\GODSC_3$ hard requirements. The $H$-optimization of 
$C$ seems to exhibit startling ability to  guarantees $C$ without human intervention. It will be our attempt to validate this with more examples in future. So far we have considered commitment as soft requirement.
 In general, the soft requirement can be used to optimize \MPNC w.r.t. any regular property of interest, where as the hard requirements gives the necessary must guarantees. Such soft requirements may embody performance and quality goals. Hence, it is advisable to use the combination of hard and soft requirement based on the criticality of each requirement. 

\vspace*{-0.3cm}
\section{Discussion along with Related Work}
\label{section:discussion}
%
\oomit{
The aim is to get a controller which guarantees hard requirement invariantly and satisfies the soft requirements
``as much as possible''. Soft requirement may occasionally conflict with the hard requirement or other soft requirements and hence they cannot be
satisfied invariantly but only intermittently. The synthesis method ''maximizes'' the satisfaction of soft requirements. 
{\em To the best of our knowledge, \DCSYNTH is among the first reactive synthesis tools supporting soft requirement guided controller synthesis}.
}
%
 \vspace*{-0.3cm}
 Reactive synthesis from Linear Temporal Logic (LTL) specification is a widely studied area\cite{BCGHHJKK14} and a considerable number of tools \cite{BBFJR12, FFT17} supported by theoretical foundations are available. The leading tools such as Acacia+\cite{BBFJR12} and BoSy\cite{FFT17} mainly focus on the future fragment of LTL. 
In contrast, this paper focuses on {\em invariance} of complex regular properties, denoted by $\invariant~D^h$ where $D^h$ is a \qddc\/ formula. For such a property, a maximally permissive supervisor ($\MPNC$) can be synthesized. Formally, logics LTL and \qddc\/ have incomparable expressive power. There is increasing evidence that regular properties form an important class of requirements \cite{DLB14, LRT17,MPW17}. The IEEE standard PSL extends LTL with
regular properties \cite{PSL}. Wonham and Ramadge in their seminal work \cite{RW87,RW89} first studied the synthesis of maximally permissive supervisors from regular properties.
In  their supervisory control theory, $\MPNC$ can in fact be synthesized for a richer property class $AGEF ~D^h$ 
 \cite{ELTV17}. Tool \DCSYNTH can be easily extended to support such properties too. 
Riedweg {\em et al} \cite{RP03} give some sub-classes of Quantified Mu-Calculus for which $\MPNC$ can be computed. However, none of these works address soft requirement guided synthesis. 

Most of the reactive synthesis tools focus on correct-by-construction synthesis from hard requirements.
For example, none of the tools in recent competition on reactive synthesis, SYNTCOMP17 \cite{SYN17},  address the issue of guided synthesis which is our main focus. In our approach, we refine the \MPNC\/ (for hard requirements) to a sub-supervisor optimally satisfying  the soft requirements too. 
Since LTL does not admit \MPNC, it is unclear how our approach can extend to it.

In quantitative synthesis, a weighted arena is assumed to be available, and algorithms for  optimal controller synthesis for diverse objectives such as Mean-payoff \cite{BCHJ09} or energy \cite{BMRLL18} have been investigated. In our case, we first synthesize the weighted arena from given hard and soft requirements. Moreover, we use $H$-optimality as the synthesis criterion. This criterion has been widely used in reinforcement learning as well as optimal control of MDPs \cite{Put94,Bel57}.
In other related work, techniques for optimal controller synthesis are discussed by Ding {\em et al} \cite{DLB14}, Wongpiromsarn {\em et al} \cite{WTM12} and Raman {\em et al} \cite{VADRS15}, where they have explored the use of receding horizon model predictive control along with temporal logic properties. 
\oomit{
 Significant advancement has been made in the area of LTL synthesis to produce better quality controllers based on mean-pay off objective. But, it requires either the quantitative properties (for guidance) specified as another DFA or a specification, where each alphabet can be given a weight \cite{BCHJ09, BBFR13}. It makes these techniques difficult to use in practice. However, in our framework, a logic \qddc can be used uniformly for specification (using hard requirement) as well as  guidance (using soft requirement). Further, our technique uses soft requirements ($D^s$) to synthesize $H$-optimal supervisor from $MPNC$.
}

Since our focus is on the quality of the controllers, we have also defined metrics and measurement techniques for comparing the controllers for their guaranteed (based on must dominance) and expected case performance. For the expected case measurement, we have assumed that inputs are $iid$. However, the method can easily accommodate a finite state Markov model governing the occurrences of inputs. 

\DCSYNTH\/ uses an efficient BDD-based symbolic representation, inherited from tool MONA \cite{Mon01} for storing automata, supervisors and
controllers. The use of eager minimization (see Appendix \ref{sec:ssdfa} for implementation details) allows us to handle much more complex properties (see  Appendix \ref{appendix:comparision}).
\oomit{
To the best of our knowledge, there exist no other synthesis tool in the open domain, which i) supports interval temporal logic based specification as well as ii) can be guided by soft requirements. In view of this, though the expressive power of \qddc and LTL are different, we compare the performance of our tool with the existing LTL based tools only (which supports only hard requirements), presented in  
Appendix \ref{appendix:comparision}.
}
\oomit{
Reactive synthesis from Linear Temporal Logic (LTL) specification has been widely studied and considerable theory \cite{BCGHHJKK14,ABK16} and tools exist \cite{BBFJR12, FFT17}.
The leading tools such as Acacia+\cite{BBFJR12} and BoSy\cite{FFT17} mainly focus 
on the future fragment of LTL. 
By contrast, this paper focuses on invariance of complex regular properties.
Most synthesis tools have focused on correct-by-construction synthesis from hard requirements. For example, none of tools in recent SYNTCOMP17 \cite{SYN17} address the issue of guided synthesis. 
Our focus on the regular properties allows Maximally permissive supervisor and their optimization using H-optimality.  
\DCSYNTH\/ uses an efficient BDD-based symbolic representation, inherited from tool MONA \cite{Mon01}, for storing automata
as well as Mealy machines. The possibility of eager minimization (see \cite{WPM18a} for implementation details) allows us to handle much more complex properties as compared to other tool (See Appendix \ref{appendix:comparision}), as DFA can be efficiently minimized.
}

\vspace{-2ex}
\section{Conclusions}
\label{section:conclusion}
\vspace{-1ex}
We have presented a technique for guided synthesis of controllers from hard and soft requirements specified in logic \qddc. 
This technique is also implemented in our tool \DCSYNTH\/.
Case studies show that combination of hard and soft requirements provides us with a capability to deal with unrealizable (but desirable), conflicting and default requirements. 
In context of assumption-commitment based specification, we have shown with case studies that 
soft requirements improve the expected case performance, where as  hard requirements provide certain (but typically conditional) guarantees on the synthesized controller. Hence, the combination of hard and soft requirements as formulated in $Type3$ specifications offers a superior choice of controller specification. This is confirmed by theoetical analysis as well as experimental results. 
\oomit{
The tool gives us a Mealy machine (with non-deterministic outputs) offering various realizable options to choose from.  This is a novel tool, which has the guidance as a part of specification itself.
}
In the paper, we have also explored the experimental ability to compare the controller performance using \emph{expected value} and \emph{must dominance} metrics. This helps us in designing better performing controllers.

\oomit{Especially the scheduling constraints 
(such as response time requirement in our synchronous arbiter example) can be specified very naturally and succinctly in \qddc. 
A distinct advantage of such properties is that they can be represented as Finite State Automaton which can be minimized.
Formally, logic \qddc\/ has exactly the  expressive power of regular languages. 
A recent paper shows succinctness of \qddc\/ in formalizing practical requirement 
notations such as timing diagrams as compared with industry standards like PSL \cite{MPW17}.
\DCSYNTH uses an efficient BDD based symbolic representation (for all our automaton), inherited from tool MONA \cite{Mon01}. 
The possibility of eager minimization allows us to handle much more complex properties as compared to other tool (See Appendix \ref{appendix:comparision}). 
}
\oomit{
Duration Calculus(DC) was originally proposed by Zhou {\em et al}~\cite{CHR91}, for modelling
real-time requirements.
There is extensive work on requirement modelling using DC, 
as well as model checking DC and \qddc properties \cite{Pan01b}. 
However, algorithmic synthesis of controllers from \qddc\/ specification, as presented here, is new. 
}

\oomit{
A principal advantage of \qddc\/ is that its formulas can be converted to language equivalent minimal DFA. In \DCSYNTH/   
we leverage the highly efficient BDD-based semi-symbolic automaton  (SSDFA) representation, 
for efficient implementation of controller synthesis. 
Details of this representation can be found the full version of this paper \cite{full}.
Experimental results show that the approach is useful and efficient.
(Section \ref{appendix:comparision} gives comparison with some leading reactive synthesis tools on examples with only hard requirements).
The main reason for the efficiency of \DCSYNTH\/ is {\em eager minimization} of SSDFA at all stages of synthesis. 
}

\bibliographystyle{plain}
\bibliography{awRef1}
\clearpage
\appendix
\section{Other Case Studies}
\label{appendix:morecasestudies}

In this section we present 3 more case studies: 
\begin{enumerate}
\item $n$-client shared resource arbiter (several different specifications),
\item alarm annunciation system.
\end{enumerate}

\subsection{Arbiter}
\label{sec:casestudy_arbiter}

An $n$-client resource (e.g. bus) arbiter is a circuit 
with $r_1,\ldots,r_n$ as inputs ($r_i$ high indicates $i^{th}$ client request access to resource) and 
$ack_1,\ldots,ack_n$ ($a_i$ indicates arbiter has granted the $i^{th}$ client access to the resource) as the corresponding outputs. Arbiter arbitrates among a subset of requests 
at each cycle by setting one of the acknowledgments ($a_i$'s) {\em true}. 
Hard requirements on the arbiter include the following three invariant properties.
\begin{equation}
 \begin{array}{l}
  Mutex(n) \df true\textrm{\textasciicircum} \langle ~\land_{i \neq j} ~\neg ( a_i \land a_j) ~ \rangle, ~1 \leq i \leq n \\
  NoLoss(n) \df true\textrm{\textasciicircum} \langle ~~ (\lor_i r_i) \Rightarrow (\lor_j a_j) ~\rangle, ~1 \leq i \leq n \\
  NoSpurious(n) \df true\textrm{\textasciicircum} \langle ~~\land_i ~(a_i \Rightarrow r_i) ~~\rangle, ~1 \leq i \leq n \\
  ARBINV(n) \df Mutex(n) \land NoLoss(n) \land NoSpurious(n). 
 \end{array}
\end{equation}
Thus, $Mutex$ gives mutual exclusion of acknowledgments, $NoLoss$ states that
if there is at least one request then there must be  an acknowledgment and 
$Nospurious$ states that acknowledgment is only given to a requesting cell. 

In the literature various arbitration schemes for the arbiter have been proposed, 
here we consider the following schemes. 
\begin{itemize}
\item $k$-cycle response time: 
let $Resp(r,a,k)$ denote that if request has been high for last $k$ cycles 
there must have been at least one acknowledgment in the last $k$ cycles. 
Let $ArbResp(n,k)$ state
that for each cell $i$ and for all observation intervals the formula $Resp(r_i,a_i,k)$ holds. 
\begin{equation}
\begin{array}{l}
  Resp(r,a,k) = 
true\textrm{\textasciicircum}(([[r]]\ \&\&\ (slen =(k-1)))~\Rightarrow \\
   \hspace*{2cm} true\textrm{\textasciicircum}(scount\ a > 0 \ \&\&\ (slen =(k-1)))\\
  ArbResp(n,k) \df (\land_{1 \leq i \leq n} ~(Resp(r_i,a_i,k)) \\
  ArbCommit(n,k) \df ARBINV(n) \land ArbResp(n,k)
\end{array}
\end{equation}

Based on $k$-cycle response we can define various arbiter specification with different properties as follows:
\begin{itemize}

\item Then specification $Arb^{hard}(n,k)$ is the following $k$-cycle response time \DCSYNTH specification. 
\begin{small}
\begin{equation}
 Arb^{hard}(n,k) \df (\{r_1, \ldots, r_n\}, \{a_1, \ldots, a_n\}, ArbCommit(n, k), \langle \rangle) 
\end{equation}
\end{small}

\item The specification $Arb^{hard}$ above, invariantly satisfy $ArbCommit(n,k)$ if $n \leq k$. Tool  \DCSYNTH gives us
a concrete controller for the instance  ($D^h=ArbCommit(6,6)$, $D^s=true$).
It is easy to see that there is no controller which can invariantly satisfy $ArbCommit(n,k)$ if $k < n$.
Consider the case when all requests $r_i$ are continuously true. Then, it is not possible to give response to
every cell in less than $n$ cycles due to mutual exclusion of acknowledgment $a_i$. 

To handle such desirable but unrealizable requirement we make an assumption. Let the proposition
$Atmost(n,i)$ be defined as \\
$\forall S \subseteq  \{1 \ldots n\}, |S| \leq i. ~~  \land_{j \notin S} \neg r_j$.\\ It states that  at most $i$ requests are true simultaneously. Then, the {\bf arbiter assumption} is the formula
{\bf $ArbAssume(n,i)$} = \verb#[[ Atmost(n,i)]]#, which states that $Atmost(n,i)$  holds invariantly in past.

The synchronous arbiter specification {\bf $Arb(n,k,i)$} is the assumption-commitment pair $(ArbAssume(n,i), ArbCommit(n,k))$. The four types of controller specifications can be derived from this pair.
Figure \ref{figure:arbiterspecification} in Appendix \ref{sec:toolusage}  gives, in textual syntax of the specification for
$\TYPE 3(Arb(5,3,2))$, in tool \DCSYNTH. The \DCSYNTH specification for $\TYPE 3(Arb(n,k,i))$ is denoted by $Arb^{hardAssume}$ given as follows:
\begin{small}
\begin{equation}
 Arb^{hardAssume}(n,k,i) \df 
 \begin{array}[t]{l}
 (\{r_1, \ldots, r_n\}, \{a_1, \ldots, a_n\}, \\
 \mathit{pref}(ArbAssume(n,i)) \Rightarrow EP(ArbCommit(n,k)), \\
 ~~\langle ArbCommit(n,k) \rangle~) 
 \end{array}
\end{equation}
\end{small}

\item $k$-cycle response time as soft requirement: 
we specify the requirement of response in $k$ cycles as a 
\emph{soft requirement} \footnote{Note that soft requirement in this example is a lexicographical list of several \qddc formulas. The tool \DCSYNTH implements a \GODSC computation using weighted requirements (See \ref{subsec:computegodc} for details)} as below. 
\begin{small}
\begin{equation}
 Arb^{soft}(n,k) \df 
 \begin{array}[t]{l}
 (\{r_1, \ldots, r_n\}, \{a_1, \ldots, a_n\}, ARBINV(n), \\
 ~~\langle Resp(r_n,a_n,k):2^n,\ldots, Resp(r_1,a_1,k):2^1 \rangle~) 
 \end{array}
\end{equation}
\end{small}

\end{itemize}
\item Token ring arbitration: 
a token is circulated among the masters in a round robin fashion. 
The token is modeled using the variables $tok_i$'s ($1\leq i\leq n$). 
Exactly one of $tok_i$'s will hold at any time and 
if $tok_i$ is {\em true} then we mean that $master_i$ holds the token. 
The arbiter asserts acknowledgement $a_i$ whenever request $r_i$ and $tok_i$ are {\em true}, 
i.~e.~priority is accorded to the request of the master which holds the token. 
\begin{equation}
\begin{array}{l}
TokInit(n)\df <tok_1\ \&\&\ (\land_{2\leq i\leq n}!tok_i)> \textrm{\textasciicircum}\ true\\
TokCirculate(n)\df [](\land_{1\leq i\leq n}({{tok_i}}\ \textrm{\textasciicircum}\ (slen =1) <=> \\
\hspace{6cm}(slen =1)\ \textrm{\textasciicircum}\ {{tok_{i\%n+1}}}))\\
TokResp(n)\df \land_{1\leq i\leq n}[[(r_i\ \&\&\ tok_i) => a_i ]]\\
Token(n)=TokInit(n)\land TokCirculate(n)\land TokResp(n).\\
\end{array}
\end{equation}
Let $ARBTOKEN(n)=ARBINV(n)\land Token(n)$. 
Then $Arb^{tok}(n)$ is the following \DCSYNTH\/ specification. 
\begin{small}
\begin{equation}
 Arb^{tok}(n) \df (\{r_1, \ldots, r_n\}, \{a_1, \ldots, a_n\}, ARBTOKEN(n), \langle \rangle ,\langle \rangle) 
\end{equation}
\end{small}
\end{itemize}

\subsection{Alarm Annunciation System}
\label{subsection:aas}
The next case study is Alarm Annunciation System(AAS) used in a process control system for 
annunciation for various alarms in the control room. 
The Alarm Annunciation involves the standard \emph{Automatic Ring-Back Sequence}
for all the digital inputs meant for alarm annunciation and provide the necessary outputs.
The specification of Automatic Ring-Back Sequence is given in Table \ref{tab:aas}.
All digital inputs representing alarm conditions are scanned periodically. 

\begin{table}[h!]
  \begin{center}
    \caption{Automatic Ring-Back Sequence for Alarm Annunciation}
    \label{tab:aas}
    \begin{tabular}{|p{0.2\linewidth} | p{0.4\linewidth} | p{0.4\linewidth}|}
    \hline 
      \textbf{Input} & \textbf{Lamp Output} & \textbf{Audio Output}\\
      \hline
      Normal to Alarm & Fast Flashing & Normal Alarm Hooter On\\
      \hline
      Acknowledged & Lamp On & Normal Alarm Hooter Off\\
      \hline
      Alarm to Normal & Slow Flashing & Ringback Hooter On\\
      \hline
      Reset & Lamp Off & Ringback Hooter Off\\
      \hline
    \end{tabular}
  \end{center}
\end{table}

As shown in the Table \ref{tab:aas} that Automatic Ring-Back Sequence specification takes the alarm
signal as input. The high value of signal represents the alarm state, otherwise the signal is said to
be in normal state. Other inputs are Acknowledgment, Reset and Silence inputs, which are controller 
by the operator.
There are three output elements: Lamp, Normal Hooter and Ringback Hooter.
There is a Lamp corresponding to each alarm signal, whereas Hooters are common to all alarm signals.
Lamp can either be \emph{Fast Flashing, Slow Flashing, Steady On or Off} states.
We have encoded the requirements in \DCSYNTH\/ to synthesize the controller.
The Silence input can be used by the operator to switch off Hooters.

\subsubsection{Result of Synthesis}: As discussed in the previous case study soft requirements
helped in specification of requirements concisely. The controller synthesized has 8 states.
We could simulate the controller and verified the correctness.

\oomit{
\subsection{Industrial Pump/Valve Controller}
\label{subsection:industrial}
We now present the case study consisting of the control of equipment
(e.g. Pump and Valve) related to a process control system in a nuclear reactor.

\begin{table}[h!]
  \begin{center}
    \caption{Requirements for each state}
    \label{tab:statemachine}
    \begin{tabular}{|p{0.2\linewidth} | p{0.4\linewidth} | p{0.4\linewidth}|}
    \hline 
      \textbf{State} & \textbf{Entry Action} & \textbf{Sustained Action}\\
      \hline
      OFF & 1) CLOSE command for V1, V3 & Nil\\
      & 2) STOP command for P1 & \\
      & 3) V1, V3 and P1 are removed from AUTO mode & \\
      \hline
 
      OFFH & 1) CLOSE command for V1, V3 & Change state to START if P2 stops  \\
      & 2) Pump P1 is set to AUTO mode, if transfer to this state was not made on self stopping of Pump P1 & \textbf{Constraints}: When value of C17 is TRUE, changing of state from OFFH to ONH shall not be permitted. Under this condition, if there is command for change of state to NORMAL, state shall be switched to START\\
      \hline
      ONH & 1) OPEN command for V1, V3 & 1) Change state to NORMAL if Pump P2 self stops. This change of state takes place only if Pump P1 is set in AUTO mode \\
      & 2) Valves V1, V3 are set to AUTO mode & 2) Change state to OFFH if C17 is set to TRUE. This change of state shall be disabled, if C16 is also set to TRUE along with C17. Mentioned change of state shall be enabled again only, when both C16 and C17 becomes FALSE \\
      & 3) STOP command for P1 & \textbf{Constraints:} Changing of state from ONH to START shall be disabled. When such command is issued, state shall be changed to NORMAL \\
      & 4) Pump P1 is set to AUTO mode, if transfer to this state was not made on self stopping of Pump P1  &\\
      \hline
      START & 1) START command for Pump P1  & 1) Change state to OFF if P1 self stops.\\
      & & 2) CLOSE V1 and V3, when value of signal C17 is set to TRUE. Above mentioned command shall be disabled, if C16 is also set to TRUE along with C17. Mentioned change of state shall be enabled again only, when both C16 and C17 becomes FALSE \\
      \hline
      NORMAL & 1) START command for P1 & 1) Change state to ONH if P1 self stops\\
      & 2) OPEN command for V1, V3 & 2) Change state to OFFH, when signal C17 is set to TRUE. This change of state shall be disabled, if C16 is also set to TRUE along with C17. Mentioned change of state shall be enabled again only, when both C16 and C17 becomes FALSE \\
      & 3) Valves V1, V3 are set to AUTO mode (only when change to this state happens from OFF) & \\
      \hline
    \end{tabular}
  \end{center}
\end{table}

The textual specification of the requirements are as follows:
\begin{itemize}
\item Two types of equipments namely Valve and Pump are controlled:
\begin{itemize}
\item Any Valve can have two positions: OPEN, CLOSE.  
Control Logic can specify two types of commands for valves: OPEN command to OPEN the valve
and CLOSE command for closing. 

\item Any Pump can have two states: RUNNING, STOPPED.
Control Logic can specify two types of commands for Pumps: 
START command for running  the  pump and STOP command for stopping the Pump. 
A Pump can stop without any command (manual / automatic) from the control system. 
This can arise, when equipment fails. This change of state of the pump will be termed as SELF STOP.
\end{itemize}

\item Equipment mentioned above can be in one of the following two modes of operation: 
\begin{itemize}
\item Auto : In this mode, equipment gets controlled through specified control logic.
\item Manual: In this mode, equipment is only controlled through operator command and automatic logic will have no affect on the equipment status, even if specified conditions for the logic are met. Any equipment can be individually set in one of the two modes. 
\end{itemize}

\item A discrete number of identified stable states of the process system are defined. In each of these states, equipment being controlled is taken to a specific state. 

\item Process system state can be changed either through manual command by the operator or through auto logic based on plant parameters. Each state has an entry action and the sustained action.

\item The specification for entry and sustained action for each state is given in table \ref{tab:statemachine}. Initial state is NORMAL state and the following are the identifiers for the equipment being controlled:

\begin{itemize}
\item Identifier of valve being controlled- V1, V3 
\item Identifier of pump being controlled-P1 
\item This process system has 5 states – OFF, OFFH, ONH, START, NORMAL 
\item Control logic uses two digital signals – C16, C17 
\item Control logic uses status of pump P2, which gets controlled by Process System-2
\end{itemize}
\end{itemize}


\subsubsection{Result of Synthesis}
The industrial example gave a use case of soft requirements to generate a meaningful controller.
Soft requirements allowed us to specify the default values of the output in concise manner.
Without specification of soft requirements the synthesis algorithm was generating a trivial controller
which was setting all the command for both the equipments which was not useful.
Then we added the soft requirements to specify that the commands should not be generated if possible.
Then the controller 10 state controller was produced to generated a command only when is it required.
 
The soft requirements also allowed us to uncover the source of inconsistency in the requirement (leading to unrealizability).
Few of the issues resolved during the case study are as follows:
\begin{itemize}
\item Specification of default values: The requirements in state machine formalism were specified with entry action and sustained action, but the values of outputs when none of the conditions specified occurs, were not specified. Incomplete specification causes tool to generate a trivial controller that make all output commands true in every cycle. Synthesized controller meets all the requirements but it is not practically useful. This is due to the fact that default values of the outputs were not explicitly specified in the original requirements 
“Soft” requirements based specification was used to specify the default values to all the output variables, which produce useful controllers.   
\item Priority of transitions: State chart specification contains transitions from one state to another state based on specified conditions. Specified conditions are not always mutually exclusive, so behaviour of the state chart was undefined when multiple transitions are possible from one state. The resolution of this required the transition to be given priorities. The highest priority transition will be taken up whenever multiple transitions are possible. After discussion, it was clarified that priority assigned to interlock logic is in the same order as it appears in the state transition table provided in the requirements. Interlock logic, which appears first, is given highest priority.  
\item Persistence of state: The persistence condition for state was not mentioned explicitly in the requirements. It was assumed as per state chart semantics that the current state would persist, if there is no transition condition enabled. 
\item Strong Vs Weak transitions: The transition conditions were specified but whether it is a strong transition (transition taken immediately and actions of next state is executed in current cycle itself) or weak transition (in present cycle the action of the current state will be executed and next cycle onward the actions of next cycle shall be executed) were not specified. The ambiguity was resolved by assuming the strong transition and Requirements were modelled with this assumption.
\end{itemize}
}
{\color{black}
\subsection{Synthesis of 2 client arbiter}
\label{section:2cellexample}
Fig.~\ref{fig:monitor2cell} gives the monitor automaton for 2-client arbiter (See section \ref{sec:casestudy_arbiter}) for n-cell arbiter specification.
 Each transition is labeled by 4 bit vector giving values of $r_1, r_2, a_1, a_2$.

\begin{figure}[!h]
\centering
\includegraphics[width=\textwidth, keepaspectratio]{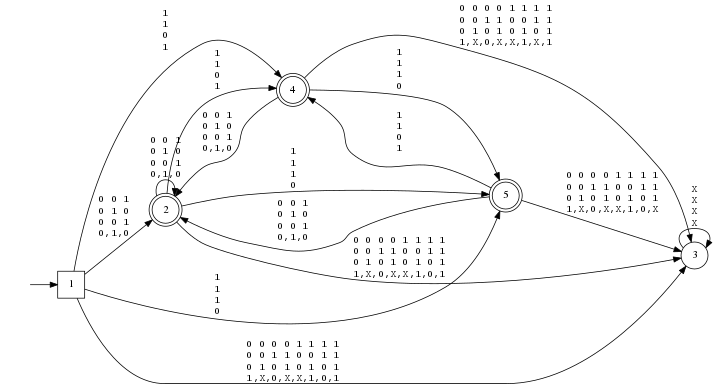}
\caption{Safety Monitor Automaton: 2 Cell Arbiter}
\label{fig:monitor2cell}
\end{figure}

\begin{figure}[h]
\begin{minipage}{2.5in}
\includegraphics[scale=.4,keepaspectratio]{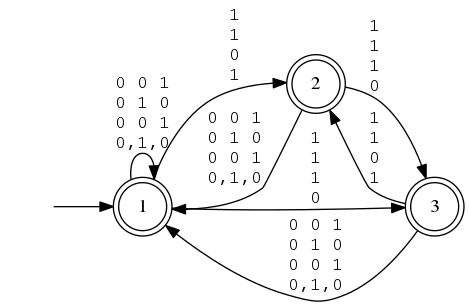}
\end{minipage} \ \ 
\begin{minipage}{2.5in}
\includegraphics[scale=.4,keepaspectratio]{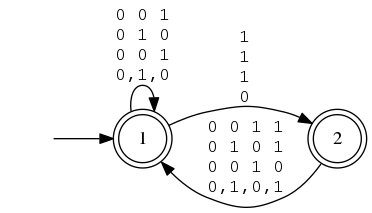}
\end{minipage}
\caption{Supervisors for $Arb^{hard}(2,2)$ 
(a): \MPNC  (b): \GODSC determinized}
\label{fig:mpnc}
\end{figure}

Fig.~\ref{fig:mpnc} gives the \MPNC automaton for the 2-cell arbiter computed from 
the safety monitor automaton of Fig.~\ref{fig:monitor2cell}. 
(There is an additional reject state. All missing transitions are
directed to it. These are omitted from the diagram for simplicity.) Note that this is a DFA 
whose transitions are labelled by 4-bit vectors representing alphabet $2^{\{r_1, r_2, a_1, a_2\}}$.
As defined in Definition \ref{def:nondetmm},
the DFA also denotes an output-nondeterministic Mealy machine with input variables $(r_1,r_2)$ and 
output variables $(a_1,a_2)$.
The automaton is nondeterministic in output 
as from state $1$, on input $(1,1)$ it can move to state $2$ with output $(1,0)$, 
or to state $3$ with output $(0,1)$. The reader can verify that the automaton is non-blocking and hence
a controller.

In 2-cell arbiter example, with soft requirements $\langle ack1, ack2 \rangle$ 
which give $ack1$ priority over $ack2$, 
we obtain the \GODSC controller automaton of Fig.~\ref{fig:mpnc}(b) from the \MPNC of Fig.~\ref{fig:mpnc}(a).
Note that we minimize the automaton at each step.
}
\section{Specification of Mine-pump and Arbiter example in \DCSYNTH}
\label{sec:exampleSources}
The specification of Arbiter(5,3,2) and MinePump(8,2,6,2) is \DCSYNTH syntax is given is Figure \ref{figure:minepumpspecification} and \ref{figure:arbiterspecification} respectively.
\begin{figure}[!b]
\caption{Mine-pump specification in \DCSYNTH}
\label{figure:minepumpspecification}
\begin{scriptsize}

\framebox{\parbox[t][][t]{\columnwidth}{
$\begin{array}{l}
\mathrm{\textsf{\#qsf "minepump"}}\\
\mathrm{\textsf{interface\{}}\\
\quad\mathrm{\textsf{input HH2Op, HCH4p;}}\\
\quad\mathrm{\textsf{output PUMPONp monitor x, ga monitor x;}}\\
\quad\mathrm{\textsf{constant w = 8, epsilon=2 , zeta=6, kappa=2;}}\\
\mathrm{\textsf{\}}}\\
\mathrm{\textsf{definitions\{}}\\
\mathrm{\textsf{//Methane release assumptions}}\\
\mathrm{\textsf{dc methane1(HCH4)\{}}\\
\quad\mathrm{\textsf{[]([HCH4]\textasciicircum[!HCH4]\textasciicircum \textless HCH4 \textgreater =\textgreater slen\textgreater zeta );}}\\
\mathrm{\textsf{\}}}\\
\mathrm{\textsf{dc methane2(HCH4)\{}}\\
\quad\mathrm{\textsf{[]([[HCH4]] =\textgreater slen\textless kappa );}}\\
\mathrm{\textsf{\}}}\\
\mathrm{\textsf{//Pump capacity assumption}}\\
\mathrm{\textsf{dc pumpcap1(HH2O, PUMPON)\{}}\\
\quad\mathrm{\textsf{([]!(slen=epsilon \&\& ([[PUMPON \&\& HH2O]] \textasciicircum \textless HH2O \textgreater)));}}\\
\mathrm{\textsf{\}}}\\
\mathrm{\textsf{dc MineAssume\_2\_6\_2(HH2O, HCH4, PUMPON)\{}}\\
\quad\mathrm{\textsf{methane1(HH2O, HCH4, PUMPON) \&\& methane2(HH2O, HCH4, PUMPON) \&\&}}\\ 
\quad\mathrm{\textsf{pumpcap1(HH2O, HCH4, PUMPON);}}\\ 
\mathrm{\textsf{\}}}\\
\mathrm{\textsf{//safety condition}}\\ 
\mathrm{\textsf{dc req1(HH2O, HCH4, PUMPON)\{}}\\
\quad\mathrm{\textsf{true\textasciicircum\textless( (HCH4 $||$ !HH2O) =\textgreater !PUMPON)\textgreater;}}\\
\mathrm{\textsf{\}}}\\
\mathrm{\textsf{dc req2(HH2O, HCH4, PUMPON)\{}}\\
\quad\mathrm{\textsf{(!(true \textasciicircum ([[HH2O]] \&\& (slen = w))))}}\\
\mathrm{\textsf{\}}}\\
\mathrm{\textsf{dc MineCommit\_8(HH2O, HCH4, PUMPON)\{}}\\
\quad\mathrm{\textsf{req1(HH2O, HCH4, PUMPON) \&\& req2(HH2O, HCH4, PUMPON);}}\\
\mathrm{\textsf{\}}}\\
\mathrm{\textsf{indefinitions\{}}\\
\quad\mathrm{\textsf{ga : MineCommit\_8(HH2Op, HCH4p, PUMPONp);}}\\
\mathrm{\textsf{\}}}\\
\mathrm{\textsf{hardreq\{}}\\
\quad\mathrm{\textsf{MineAssume\_2\_6\_2(HH2Op, HCH4p, PUMPONp) =\textgreater}} \\
\quad\quad\mathrm{\textsf{MineCommit\_8(HH2Op, HCH4p, PUMPONp);}}\\
\mathrm{\textsf{\}}}\\
\mathrm{\textsf{softreq\{}}\\
\quad\mathrm{\textsf{useind ga;}}\\
\quad\mathrm{\textsf{(ga);}}\\
\mathrm{\textsf{\}}}\\
\end{array}$}
}
\end{scriptsize}
\end{figure}

\oomit{
\clearpage
\textbf{Simulation of Synthesized Mine pump Controllers}
The controllers are encoded as Lustre specification and Lustre V4 tools are used 
for simulation. The example simulation for these three variants of mine pump
with soft requirements MPV1, MPV2 and MPV3 are shown in figures 
~\ref{fig:pumptrue}, ~\ref{fig:pumpsoft} and ~\ref{fig:pumpfalse} respectively.

\begin{figure*}[!h]
\centering
\includegraphics[width=\textwidth, keepaspectratio]{PumpTrue.png}
\caption{Simulation of mine pump controller with soft requirement MPV1}
\label{fig:pumptrue}
\centering
\includegraphics[width=\textwidth, keepaspectratio]{PumpSoft.png}
\caption{Simulation of mine pump controller with soft requirement MPV2}
\label{fig:pumpsoft}
%
\centering
\includegraphics[width=\textwidth, keepaspectratio]{PumpFalse.png}
\caption{Simulation of mine pump controller with soft requirement MPV3}
\label{fig:pumpfalse}
\end{figure*}
}

\begin{figure}[!b]
\caption{Arbiter specification in \DCSYNTH}
\label{figure:arbiterspecification}
\begin{scriptsize}
\framebox{\parbox[t][][t]{\columnwidth}{
$\begin{array}{l}
\mathrm{\textsf{\#qsf "arbiter"}}\\
\mathrm{\textsf{interface\{}}\\
\quad\mathrm{\textsf{input r1, r2, r3, r4, r5;}}\\
\quad\mathrm{\textsf{output a1, a2, a3, a4, a5, ga3;}}\\
\quad\mathrm{\textsf{constant n=3;}}\\
\mathrm{\textsf{\}}}\\
\mathrm{\textsf{definitions\{}}\\
\mathrm{\textsf{// Specification 1: The Acknowlegments shold be exclusive}}\\
\mathrm{\textsf{dc exclusion()\{}}\\
\quad\mathrm{\textsf{true\textasciicircum\textless (a1 =\textgreater !(a2 $||$ a3 $||$ a4 $||$ a5)) \&\& (a2 =\textgreater !(a1 $||$ a3 $||$ a4 $||$ a5)) \&\& }}\\
\quad\mathrm{\textsf{(a3 =\textgreater !(a1$||$a2$||$a4$||$a5)) \&\& (a4=\textgreater !(a1$||$a2$||$a3$||$a5)) \&\& (a5=\textgreater !(a1$||$a2$||$a3$||$a4)))\textgreater;}}\\
\mathrm{\textsf{\}}}\\
\mathrm{\textsf{dc noloss()\{}}\\
\quad\mathrm{\textsf{true\textasciicircum\textless (r1 $||$ r2 $||$ r3 $||$ r4 $||$ r5) =\textgreater (a1 $||$ a2 $||$ a3 $||$ a4 $||$ a5)\textgreater; }}\\
\mathrm{\textsf{\}}}\\
\mathrm{\textsf{//If bus access (ack) should be granted only if there is a request}}\\
\mathrm{\textsf{dc nospuriousack(a1, r1)\{}}\\
\quad\mathrm{\textsf{true\textasciicircum\textless (a1) =\textgreater (r1)\textgreater; }}\\
\mathrm{\textsf{\}}}\\
\mathrm{\textsf{//n cycle response i.e. slen=n-1}}\\
\mathrm{\textsf{dc response(r1,a1)\{}}\\
\quad\mathrm{\textsf{true\textasciicircum (slen=n-1 \&\& [[r1]])  =\textgreater true\textasciicircum(slen=n-1 \&\& (scount a1 \textgreater= 1)); }}\\
\mathrm{\textsf{\}}}\\
\mathrm{\textsf{dc ArbAssume\_5\_2()\{}}\\
\quad\mathrm{\textsf{[[ (!r1 \&\& !r2 \&\& !r3 \&\& !r4 \&\& !r5) $||$ (r1 \&\& !r2 \&\& !r3 \&\& !r4 \&\& !r5) $||$ }}\\
\quad\mathrm{\textsf{(!r1 \&\& r2 \&\& !r3 \&\& !r4 \&\& !r5) $||$ (!r1 \&\& !r2 \&\& r3 \&\& !r4 \&\& !r5) $||$  }}\\
\quad\mathrm{\textsf{(!r1 \&\& !r2 \&\& !r3 \&\& r4 \&\& !r5) $||$ (!r1 \&\& !r2 \&\& !r3 \&\& !r4 \&\& r5) $||$  }}\\
\quad\mathrm{\textsf{(r1 \&\& r2 \&\& !r3 \&\& !r4 \&\& !r5) $||$ (r1 \&\& !r2 \&\& r3 \&\& !r4 \&\& !r5) $||$  }}\\
\quad\mathrm{\textsf{(r1 \&\& !r2 \&\& !r3 \&\& r4 \&\& !r5) $||$ (r1 \&\& !r2 \&\& !r3 \&\& !r4 \&\& r5) $||$  }}\\
\quad\mathrm{\textsf{(!r1 \&\& r2 \&\& r3 \&\& !r4 \&\& !r5) $||$ (!r1 \&\& r2 \&\& !r3 \&\& r4 \&\& !r5) $||$  }}\\
\quad\mathrm{\textsf{(!r1 \&\& r2 \&\& !r3 \&\& !r4 \&\& r5) $||$ (!r1 \&\& !r2 \&\& r3 \&\& r4 \&\& !r5) $||$  }}\\
\quad\mathrm{\textsf{(!r1 \&\& !r2 \&\& r3 \&\& !r4 \&\& r5) $||$ (!r1 \&\& !r2 \&\& !r3 \&\& r4 \&\& r5) ]];  }}\\

\mathrm{\textsf{\}}}\\

\mathrm{\textsf{dc guranteeInv()\{}}\\
\quad\mathrm{\textsf{exclusion() \&\& noloss(HH2O, HCH4, PUMPON) \&\& nospuriousack(a1,r1) \&\&}}\\ 
\quad\mathrm{\textsf{nospuriousack(a2,r2) \&\& nospuriousack(a3,r3) \&\& nospuriousack(a4,r4) \&\&}}\\ 
\quad\mathrm{\textsf{nospuriousack(a5,r5);}}\\ 
\mathrm{\textsf{\}}}\\

\mathrm{\textsf{dc guranteeResp()\{}}\\
\quad\mathrm{\textsf{response(r1,a1) \&\& response(r2,a2) \&\& response(r3,a3) \&\&}}\\ 
\quad\mathrm{\textsf{response(r4,a4) \&\& response(r5,a5);}}\\ 
\mathrm{\textsf{\}}}\\

\mathrm{\textsf{dc ArbCommit\_5\_3()\{}}\\
\quad\mathrm{\textsf{guranteeInv() \&\& guranteeResp() ;}}\\ 
\mathrm{\textsf{\}}}\\
\mathrm{\textsf{\}}}\\

\mathrm{\textsf{indefinitions\{}}\\
\quad\mathrm{\textsf{ga3 : ArbCommit\_5\_3();}}\\
\mathrm{\textsf{\}}}\\
\mathrm{\textsf{hardreq\{}}\\
\quad\mathrm{\textsf{ArbAssume\_5\_2() =\textgreater ArbCommit\_5\_3();}}\\
\mathrm{\textsf{\}}}\\
\mathrm{\textsf{softreq\{}}\\
\quad\mathrm{\textsf{useind ga3;}}\\
\quad\mathrm{\textsf{(ga3);}}\\
\mathrm{\textsf{\}}}\\
\end{array}$}
}
\end{scriptsize}
\end{figure}

\oomit{
\begin{verbatim}

m2l-str;
var2 r1, r2, r3, r4, r5, a1, a2, a3, a4, a5, ga3;
var1 i;

(
all2 a1N, a2N, a3N, a4N, a5N, ga3N:
((import("ArbHardAssume(5,3,2)/GODSC.dfa", 
       r1->r1, r2->r2, r3->r3, r4->r4, r5->r5, 
       a1->a1N, a2->a2N, a3->a3N, a4->a4N, a5->a5N, ga3->ga3N)
)=>  (i in ga3N) 
))
=>
(
(import("ArbHardAssumeSoft(5,3,2)/GODSC.dfa", 
       r1->r1, r2->r2, r3->r3, r4->r4, r5->r5, 
       a1->a1, a2->a2, a3->a3, a4->a4, a5->a5, ga3->ga3) 
)=>(i in ga3));

\end{verbatim}
}

\section{\DCSYNTH Tool Usage and performance evaluation}
\label{sec:toolusage}
The tool \DCSYNTH uses a specification for Arbiter \verb|Arbiter.qsf| shown in Figure \ref{figure:arbiterspecification}. This file contains
the set of input and output alphabets in \emph{interface} section. The definitions/macros required
for specifying hard and soft requirements are contained in
\emph{definitions} section.
This is followed by a section called \emph{indefinitions}, to specify the required indicating monitor for a given formula (or corresponding automaton). Finally the section called \emph{hardreq} and \emph{softreq} define the hard and soft requirements respectively using the definitions and indicating monitors. The steps to synthesize a controller from the specification file is as follows.
\begin{itemize}
\item First we generate the DFAs for $D^h$, $D^s$ and the required input/output partitioning file using \textbf{qsf} command, e.g. for Arbiter example, we use \textbf{qsf Arbiter.qsf} to generate files named \textbf{Arbiter.hardreq.dfa}, \textbf{Arbiter.softreq.dfa} and \textbf{Arbiter.io} as per step 1 of synthesis method in section \ref{section:dcsynth-spec}.
\item We then use the command \textbf{synth2 Arbiter.hardreq.dfa Arbiter.softreq.dfa Arbiter.io synth.config} to synthesize the supervisors as per step 2 and 3 of synthesis method in section \ref{section:dcsynth-spec}.
The file \textbf{synth.config} is used to provide the configuration parameters like the number of iterations for H-optimal supervisor. The command produces the supervisors \textbf{\MPNC.dfa} and \textbf{\GODSC.dfa}. 

\item We then determinize the \textbf{\GODSC.dfa} using default values with command \textbf{synth\_deterministic \GODSC.dfa default.io} to get a controller called \textbf{Controller.dfa}.
The file \textbf{default.io} contains the ordered list of output literals e.g. if we have two outputs o1 and o2, then the list \{!o1,o2\} says that try to determinize the \textbf{\GODSC.dfa} with following priority for the output \textbf{\{!o1,o2\}$>>$\{!o1,!o2\}$>>$\{o1,o2\}$>>$\{o1,!o2\}}.

\item To measure the expected value of the soft requirement being satisfied, the command \textbf{aut2mrmc Controller.dfa default.io index} is used. The \textbf{index} parameter is the index of indicator variable (for the soft requirement) in the \textbf{Controller.dfa}. The command produces \textbf{Controller.tra} and \textbf{Controller.lab} files which can be imported in tool MRMC to compute the expected value.

\item Apart from expected case performance, the tool also facilitates the method for checking \textit{must dominance} between two given supervisors $S_1$ and $S_2$. As supervisors are finite state mealy machines and a commitment $C$ is a regular property, we can use validity checking of \qddc to check whether  
$S_1 \leq_{dom}^C S_2$ as formulated in the following proposition.  In tool \DCSYNTH, we provide a facility to decide must-dominance between two supervisors. The tool also gives a  counter example if must-dominance fails.
\begin{proposition}
\label{prop:checkMustDominance}
 Let $\D(S_i)$ denote \qddc\/ formulas with same language as the supervisors $S_i$ and let $C$ be regular property over input-output alphabet $(I,O)$. Then,
 $S_1 \leq_{dom}^C S_2$ iff
 \[
 \models_{qddc} ~~ \forall I ~(\left[ \forall O.~~D(S_1) \land C)\right] \quad \Rightarrow \quad  \left[ \forall O. (D(S_1) \land C) \right])
\]
\end{proposition}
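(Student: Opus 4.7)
The plan is to prove the biconditional by directly translating the set-theoretic definition of $\leq_{dom}^C$ into the QDDC-validity statement on the right, using the fact (from Theorem~\ref{theorem:formula-automaton} applied to the DFAs for $S_1$ and $S_2$) that $L(D(S_i)) = L(S_i)$, together with the QDDC semantics of $\forall I$ and $\forall O$.

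First I would unfold the definitions. By definition of must-dominance, $S_1 \leq_{dom}^C S_2$ iff for every $ii \in (2^I)^+$: $ii \in MustInp(S_1,C) \Rightarrow ii \in MustInp(S_2,C)$. Unfolding $MustInp$, this becomes: for every $ii$, if $\bigl(\forall oo.\ (ii,oo) \in L(S_1) \Rightarrow (ii,oo) \models C\bigr)$ then $\bigl(\forall oo.\ (ii,oo) \in L(S_2) \Rightarrow (ii,oo) \models C\bigr)$. Since $L(D(S_i)) = L(S_i)$, each occurrence of $(ii,oo) \in L(S_i)$ can be replaced by $(ii,oo) \models D(S_i)$, where a word over $2^{I \cup O}$ is identified with an $(ii,oo)$ pair as described in Section~\ref{sec:definitions}.

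Next I would invoke the QDDC semantics of the output quantifier. For a fixed word $\sigma$ with input-projection $ii$, $\sigma \models \forall O.\ (D(S_i) \Rightarrow C)$ holds iff for every $O$-variant $\sigma'$ of $\sigma$, $\sigma' \models D(S_i) \Rightarrow C$; since $O$-variants of $\sigma$ are in bijection with output sequences $oo$ of length $|ii|$, this is precisely the inner universal condition $\forall oo.\ (ii,oo) \models D(S_i) \Rightarrow (ii,oo) \models C$. Thus $\forall O.\ (D(S_i) \Rightarrow C)$ is a formula whose truth at $\sigma$ depends only on $ii$, and it holds iff $ii \in MustInp(S_i,C)$. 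Applying the outer $\forall I$ then quantifies over all input sequences, which yields the desired equivalence with the set-inclusion formulation of $\leq_{dom}^C$.

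The main obstacle, and the place that requires the most care, is the interaction between the interval-based semantics of QDDC and the sequence-based definitions of $L(S_i)$ and $MustInp$. Specifically, $\sigma \models D$ is defined as $\sigma,[0,\len{\sigma}-1]\models D$, so I need to verify that the $O$-variants and $I$-variants used by the QDDC quantifier semantics match exactly the coordinate-wise output (resp.\ input) alternatives allowed at a given length, and that both $D(S_i)$ and $C$ are evaluated on the full word. A second, smaller subtlety is that the paper's statement appears to contain a typo (the right-hand side of the outer implication should involve $D(S_2)$, not $D(S_1)$, and the $\land$ should read as $\Rightarrow$ to match the definition of $MustInp$); I would note this and work with the corrected form $\forall I.\ ([\forall O.\ D(S_1) \Rightarrow C] \Rightarrow [\forall O.\ D(S_2) \Rightarrow C])$, whose validity, by the chain of equivalences above, coincides exactly with $MustInp(S_1,C) \subseteq MustInp(S_2,C)$, completing the proof.
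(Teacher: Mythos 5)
Your proof is correct. Note that the paper itself states Proposition~\ref{prop:checkMustDominance} without any proof: the main text only remarks that an automata-theoretic check is possible and defers to the appendix, where the proposition is merely asserted, so there is no argument of the paper's to compare against. Your direct semantic unfolding --- replacing $(ii,oo)\in L(S_i)$ by $(ii,oo)\models D(S_i)$ via language equivalence, identifying the $O$-variants of a word with the output sequences of matching length so that $\forall O.\,(D(S_i)\Rightarrow C)$ holds at $\sigma$ exactly when the input projection of $\sigma$ lies in $MustInp(S_i,C)$, and then letting \qddc\ validity together with the outer $\forall I$ range over all input sequences --- is precisely the intended justification. You are also right to flag and repair the typographical errors in the stated formula: as printed it uses $D(S_1)$ on both sides of the implication (the consequent must use $D(S_2)$), and it writes a conjunction where the definition of $MustInp(S_i,C)$ requires the implication $D(S_i)\Rightarrow C$; with the corrected formula $\forall I\,\bigl([\forall O.\,(D(S_1)\Rightarrow C)]\Rightarrow[\forall O.\,(D(S_2)\Rightarrow C)]\bigr)$ your chain of equivalences goes through. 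The only points worth spelling out in a final write-up are the ones you already identify: the quantification over $oo\in(2^O)^+$ in $MustInp$ is effectively over sequences of length $|ii|$ (pairs of unequal length are not in $L(S_i)$, so the implication is vacuous there), and \qddc\ validity quantifies over all nonempty words, which matches the domain $(2^I)^+$ used in the definition of must dominance.
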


We use this facility to compare supervisors obtained from different types of specifications discussed in Section \ref{sec:controllertype}
\end{itemize}

\section{Synthesis with Semi-Symbolic DFA} 
\label{sec:ssdfa}
An interesting representation for total and deterministic finite state automata 
was introduced and implemented by Klarlund {\em et~al} in the tool MONA\cite{Mon01}. 
It was used to efficiently compute formula automaton for MSO over finite words. 
We denote this representation as {\em Semi-Symbolic DFA} (SSDFA). 
In this representation, the transition function is 
encoded as {\em multi-terminal BDD} (MTBDD).
The reader may refer to original papers \cite{Mon01,Mon02} for further details of MTBDD and 
the MONA DFA library. 

Here, we briefly describe the  SSDFA representation, and then consider controller synthesis on SSDFA. 
Figure \ref{fig:ssdfa}(a) gives an explicit DFA. 
Its alphabet $\Sigma$ is 4-bit vectors giving value of propositions $(r_1,r_2,a_1,a_2)$ 
and set of states $S=\{1,2,3,4\}$. 
This automaton has a unique reject state 4 and 
all the missing transitions are directed to it. 
(State 4 and transitions to it are omitted in Figure~\ref{fig:ssdfa}(a) for brevity.)
\noindent \begin{figure}[h]
\begin{minipage}{2in}
\includegraphics[scale=.3,keepaspectratio]{exportedmpncdfa.png}
\end{minipage} \ \ 
\begin{minipage}{3in}
\includegraphics[scale=.25,keepaspectratio]{dag2cell.png}
\end{minipage}
\caption{$A^{\mpnc}$ for $Arb^{hard}(2,2)$ 
(a): External format  (b): SSDFA format}
\label{fig:ssdfa} 
\end{figure}

Figure \ref{fig:ssdfa}(b) gives the SSDFA for the above automaton. 
Note that states are explicitly listed in the array at top and 
final states are marked as $1$ and non-final states marked as $-1$. 
(For technical reasons there is an
additional state $0$ which may be ignored here and state 1 may be treated as the initial state). 
Each state $s$ points to
shared MTBDD node encoding the transition function $\delta(s):\Sigma \rightarrow S$ 
with each path ending in the next state. 
Each circular node of MTBDD represents a {\em decision node} 
with indices $0,1,2,3$ denoting variables $r_1,r_2,a_1,a_2$. Solid
edges lead to true co-factors and dotted edges to false co-factors. 

MONA provides a DFA library implementing automata 
operations including  product, complement, projection and minimization on SSDFA. Moreover, automata may be constructed
from scratch by giving list of states and adding transitions one at a time. 
A default transition must be given to make the automaton total. 
Tools MONA and DCVALID use eager minimization while converting formula into SSDFA.

\noindent {\bf Remark 1}: DFA in Figure \ref{fig:ssdfa} also denotes a 
Output-nondeterministic Mealy machine with input alphabet $(r_1,r_2)$ 
and output alphabet $(a_1,a_2)$. 
Automaton is nondeterministic in its output as $\delta(1,(1,1,1,0))$ $=2$ and 
$\delta(1,(1,1,0,1))=3$.

We use SSDFA to efficiently synthesize the \MPNC and \GODSC for the \DCSYNTH specification  $(I,O,D^h, D^s)$, without actually expanding the specification automata into game graph. 
The use of SSDFA leads to significant improvement in the scalability and computation time of the tool.

\subsection{Computing Maximally Permissive Supervisor (\MPNC)}
\label{subsec:computempnc}
Recall the synthesis method in Section \ref{section:dcsynth-spec}.
Let the hard requirement automaton be $A(D^h)=\langle S,2^{I\union O},\delta,F\rangle$.
We construct the  maximally permissive supervisor
by iteratively applying $Cpre(A(D^h), X)$ to compute set of winning states $G$, as outlined section \ref{sec:MPNCConstruction}.
This requires efficient implementation of $\mathit{Cpre(A(D^h),X)}$ over SSDFA $A(D^h)$.  The symbolic algorithm for $\mathit{Cpre}$  marks, \textbf{(a)} each leaf node representing state $s$ by truth 
value of $s \in X$, \textbf{(b)} each decision node associated with an input variable with $AND$ of its children's value, and \textbf{(c)} each decision node associated with output variable with $OR$ of its children's value. The computation is carried out bottom up on MTBDD and takes time $|MTBDD|$, 
where $|MTBDD|$ is the number of BDD nodes in it. 
In contrast the enumerative method for implementation of $Cpre$ would have taken time of the order of $2^{| I \cup O|}$.

Next we compute the automaton $\MPNC(D^h) = \langle G\union\{r\}, 2^{I\union O},G,\delta'\rangle$ 
by only retaining transitions between the winning states $G$. Here $r$ is the unique reject state introduced to make the automaton total.
We consider the following two methods. 
\begin{itemize} 
\item {\em Enumerative method:} $\MPNC(D^h)$ is constructed from $\A(D^h)$ by adding a transition at a time as follows: 
for any $s \in G$ if $\delta(s,(i,o))\in G$ then $(s,(i,o),\delta(s,$ 
$(i,o))\in\delta'$.
Clearly, this algorithm has  time complexity $|S| \times 2^{| I \cup O|}$.
Finally, we make $A^{\mpnc}$ total by adding all the unaccounted transitions from any state to the reject state $r$.
\item {\em Symbolic method:}  in this method, the
MTBDD of $\A(D^h)$ is modified so that each edge pointing 
to a state in $S-G$ is changed to go to the reject state $r$. 
Note that this makes states in $S-(G \cup \{r\})$ inaccessible. 
Now this modified SSDFA is minimized to get rid of inaccessible states and to get smaller \MPNC. 
The time complexity of this computation is $O(|MTBDD|)$ for modifying the links 
and $N.t.log(N)$ for minimization where $N$ is number of states and
$t$ is the size of alphabet in $A(D^h)$.
 \end{itemize}
In {\bf Table \ref{tab:comparisionmpncopt}} we give experimental results comparing the computation of $\MPNC(D^h)$ using the two algorithms. 
It can be seen that the symbolic algorithm can be faster by several orders of magnitude.
This is because we do not construct the \MPNC from scratch; instead we only redirect some links in MTBDD of $A(D^h)$ which is already computed.
The Mine-pump specification used in the  Table \ref{tab:comparisionmpncopt} is given in Section \ref{section:minepumpcasestudy}.



\begin {table}[!h]
\caption {\MPNC Synthesis: Enumeration vs symbolic method (time in seconds). For $\A(D^h)$ we give
number of states and time to compute it from the $\qddc$ hard requirement formula. For $\MPNC(D^h)$ we give
its number of states and time to compute it using the two methods.
$S_t$, $T_s$, $En$ and $Sy$ represent total no. of states, time in seconds, enumerative method and symbolic method respectively.
The Example $Arb^{hard}(n,k)$ represents the specification $\TYPE 0(Arb(n,k,n))$ and $Arb^{soft}(n,k)$ represents the example $\TYPE 2(Arb(n,k,n))$.
}
\label{tab:comparisionmpncopt}
        \begin{center}
        \begin{tabular}
        {|c|c|c|c|c|c|c|}
                \hline
                \multirow{2}{*}{Example} &\multirow{2}{*}{Hard Requirement} &  \multicolumn{2}{|c|}{ $\A(D^h)$ }    & \multicolumn{3}{|c|}{ $\MPNC(D^h)$ }\\
                \cline{3-7}
                && $S_t$ & $T_s$   & $S_t$ & $T_s$ & $T_s$  \\
                &&  &   &  & ($En$) & ($Sm$) \\
                
                \hline      
                $Arb^{hard}(4,4)$& $ARBHARD(4,4)$ & 177 & 0.04 & 126  & 0.025563 & 0.002033 \\
                \hline    
                $Arb^{hard}(5,5)$& $ARBHARD(5,5)$ & 2103  & 0.43 &  1297 & 0.59 & 0.04 \\
                \hline    
                $Arb^{hard}(6,6)$& $ARBHARD(6,6)$ & 31033  & 9.22 & 16808  & 42.75 & 0.91 \\
                \hline    
                $Arb^{soft}(4,2)$ & $ARBINV(4)$ & 3 & 0.016 &  2 & 6.2E-4 & 7.6E-5 \\
                \hline
                $Arb^{soft}(5,3)$ & $ARBINV(5)$ & 3  & 0.020 &  2 & 1.9E-3  & 1.2E-4 \\
                \hline
                $Minepump(8,2,10,1)$ & $MineAssume(2,10,1)$  & 271 &  0.08 & 211  & 1.4E-2 & 5.8E-3 \\
            
                & $\Rightarrow ArbCommit(8)$  &  &   &   &  &  \\
                \hline  
           
	\end{tabular}
\end{center}
\end{table}

\subsection{Computing $H$-Optimal Supervisor (\GODSC)}
 \label{subsec:computegodc}
 
In this step we compute the \GODSC from \MPNC. For a given maximally permissive supervisor \MPNC, a \qddc formula $D$ and an integer parameter H. We get the H-optimal sub-supervisor of \MPNC called \GODSC by iteratively computing  $Val(s,p+1)$ from $Val(s,p)$ for $0 \leq p < H$ as outlined in Section \ref{sec:HoptimalComputation} \footnote{Note that the tool \DCSYNTH in general allows a lexicographical list of soft requirement, it is basically a lexicographical list of several \qddc formulas. The tool \DCSYNTH implements a \GODSC computation based on this lexicographical (or with explicit weight to each soft requirement) list by using the weight for each transition as the \emph{sum of weights of all the soft requirement} being satisfied on that transition. The tool also allows the discounting factor $\gamma$ which is used to give higher weight to the requirements being satisfied in near future}.
This step can be denoted by  $\mathit{VALPre(\A^{Arena})}$.

{\bf Remark 2}: For $\A^{Arena}$  a transition has the form $\delta(s,(i,o,v))$ with $i \in 2^I, o \in 2^O, v \in 2^{\{w\}}$. However, from the definition of $Ind(D^s, w)$, the value of $w$  is uniquely determined by $(s,(i,o))$ in the corresponding automaton $\A(Ind(D^s, w))$. Hence we can abbreviate the transition as $\delta(s,(i,o))$. \qed
 
Now to compute $\GODSC$ we again have two methods: one is enumerative and other is symbolic method. We give the algorithm and associated complexity results for one value iteration (i.e. for $\mathit{VALpre}$  followed by $O_{max}$ computation. 
 Let $Q$ be the set of states of $\A^{Arena}$.

\begin{itemize}
\item \textit{Enumerative Method}: As given in Step (3) of synthesis method, for each state $s$ we need
to enumerate all paths starting from $s$ to get $Val(s,p+1)$ from $Val(s,p)$, which will take time of the
order of $2^{|I \cup O|} \times k$, where $k$ is the number of soft requirements (In this paper $k$ is assumed to be 1). 
Similar complexity will be required to get the list of transitions with maximum values denoted as $o_{max}$ (Note that there can be multiple transitions with same $o_max$, all such transitions will be included in \GODSC).
Hence, As the algorithm terminates after $H$ iterations the total time complexity of entire algorithm for $H$ iteration is $|Q| \times 2^{| I \cup O|} \times H$ (for $k=1$). 

\oomit{
 \item {\em Enumerative method (for $i^{th}$ iteration):} For each state $s \in Q$ and for each input $i \in 2^I$ compute $o_{max}^i$ and corresponding target state $s'$ as explained above. 
We add the transition $\delta_{godsc}(s,(i,o_{max}))=(s',t')$ to $A^{godsc}$.  
This takes $|Q| \times 2^{|I|}$ computations of $o_{max}^i$. 
For finding $o_{max}^i$ for a given input, 
we enumerate each possible output maintaining the value of current optimal $o_{max}^i$. 
Finding $o_{max}^i$ takes $2^{|O|} \times k$ steps (each of constant time) where $k$  is the number of 
soft goals. This is because evaluating $ValSoft$ takes ${\mathcal O}(k)$ time. 
Hence the time complexity of entire algorithm shall be $|Q| \times 2^{| I \cup O|} \times k$.
}
 
\item {\em Symbolic method:} 
For this optimization to be applicable
we assume that in MTBDD representation of $A^{Arena}$, all the input variables occur before the output variables $O$ and the indicating variable $w$ (in general it can be a set if $k>1$).
A node in MTBDD is called a {\em frontier node} if it is labelled with an output or a witness variable, and all its ancestors are labelled with input variables. 
For example, in Figure \ref{fig:ssdfa}(b), these are nodes labelled 2 (they happen to occur at same level in this example).
For each frontier node enumerate each path $\pi$ within the MTBDD below the frontier node 
(this fixes values of $(o,w) \in 2^O \times  2^W$ occurring on $\pi$ as well as next state $s'$). 
Update the optimal $o_{max}$  as well as next state $s_o$  based on $wt(o,v)$ for paths seen so far. 
This takes time $O(d_f \times k)$ where  $d_f$ is the number of paths 
in $MTBDD$ below the frontier node $f$. 
This optimal output $o_{max}(f)$ as well as next state $s_o$ for \emph{each value iteration} is  stored in each frontier node $f$. 
The total time taken is $O(d_{output}  \times H)$ where $d_{output} = \Sigma_{f \in Fr} ~ d_f$ and $Fr$ is the set of all frontier nodes. $H$ is the number of steps in value iterations. 

In second step, for each state $s \in Q$, enumerate each path from state $s$ to a frontier node $f$. 
This fixes the valuation of input $x$. Insert a transition $\delta_{\godsc}(s,(x,o_{max})) = s_o$ to $\A^{\godsc}$. Let the total number of paths up to frontier nodes be $d_{input}$. 
Then the second step takes time $O(d_{input} + |Q|)$ where time taken to insert a transition in $\A^{\godsc}$ is assumed to be constant. Hence total time for entire algorithm is $\A^{\godsc}$ is $O(d + |Q|) \times H$ where $d$ is total number of paths in MTBDD of $\A^{Arena}$ (here $k$ is assumed to be 1).
\end{itemize}


It may also be noted that in worst case, the total number of MTBDD paths $d$ is of 
size $O(2^{|I \cup O|})$ and two algorithms have comparable complexity. 
But in most cases, the total number of MTBDD paths $d \ll 2^{|I \cup O|}$ 
and the symbolic algorithm turns out to be more efficient. 

\oomit{
{\bf Table \ref{tab:comparisionlodcopt}}
shows experimental evaluation of time taken for computing GODSC($A^{Arena}$, 1) using the two technique. \emph{The results for one iteration is used to eliminate the dependence on the number of iterations $H$}.


\begin {table}[!h]
\caption {\GODSC synthesis with only 1 iteration: Enumeration vs Symbolic method. For $A^{soft}$, its number of
states and time (in sec) to compute it from soft requirement formulas are given. For \GODSC($A^{Arena},1)$, its number of states and the time to compute it from $A^{Arena}$ are given. Note that the size of $A(D^s)$ for soft requirement $true$ is 1. 
In the table 
$State$ and $Time$ represent the number of $states$ and $time$ (in sec) for computation of corresponding automaton.
$En$ and $Sy$ represent the computation time in enumerative method and symbolic method respectively.
The Example $Arb^{hard}(n,k)$ represents the specification $TYPE0(Arb(n,k,n))$ and $Arb^{soft}(n,k)$ represents the example $TYPE2(Arb(n,k,n))$.}
\label{tab:comparisionlodcopt}
\begin{minipage}{\textwidth}
        \begin{center}
        \begin{tabular}
        {|c|c|c|c|c|c|c|c|}
                \hline
                \multirow{3}{*}{Example} &   \multirow{1}{*}{Soft Requirement} & \multicolumn{2}{|c|}{ $A^{soft}$}&\multicolumn{1}{|c|}{$A^{wfmm}$}&\multicolumn{3}{|c|}{ $A^{godsc}$ }\\ 
                
                \cline{3-8}
                & (with weights)\footnote{Weights are assumed to be in lexicographic ordering (See Remark 2)} & $states$ & $time$  & $states$  & $states$ & $time$ & $time$\\
                & &  &   &  &   & (En) & (Sy)\\
                \hline      
                 $Arb^{hard}(4,4)$ & - & 1 & 0.016  & 126 & 51 & 0.057262 & 0.019924\\
                \hline  
                $Arb^{hard}(5,5)$ & - & 1 & 0.016   & 1297  & 432 & 0.996 & 0.399\\
                \hline    
                 $Arb^{hard}(6,6)$ & - & 1 & 0.016 & 16808 & 4802 & 44.081 &15.653\\
                \hline
                 $Arb^{soft}(4,2)$ & $sr4,\ldots,sr1$ & 83 &  0.05  & 81  & 47 & 0.08 & 0.005\\
                \hline
                $Arb^{soft}(4,3)$ & $sr4,\ldots,sr1$ & 258 &  0.05  & 241  & 112 & 0.2 & 0.02\\
                \hline
                $Arb^{soft}(5,3)$  &  $sr5,\ldots,sr1$ & 1026 &  0.1 & 993  & 512 & 4.2 & 0.2\\
               \hline    
                
				$MPV1$& $PumpOn$ & 2 &  0.029  & 211 & 104 & 0.028 & 0.004\\
                \hline                  
                
                $MPV2$ & $mpsr1, PumpOn$  & 7 &  0.033  & 287 &  165 & 0.05 & 0.006\\
                \hline
                $MPV3$& $!PumpOn$ & 2 &  0.022  & 211 & 117 & 0.028 & 0.004\\
                \hline  
    \end{tabular}
\end{center}
\end{minipage}
\end{table}

In above table we have explored the different version of Mine-pump based on different soft requirements (shown in column 2 of table) with same hard requirement as $MineAssume=>MineCommit$. It can be argued that these controllers have different quality attributes. 
For example, 
$\mathit{MPV1}$ gives rise to a controller that 
aggressively gets rid of water by keeping pump on whenever possible.
$\mathit{MPV3}$ saves power by keeping pump off as much as possible. On the other hand, 
$\mathit{MPV2}$ aggressively keeps pump on but it opts for a safer policy of not keeping pump on for two cycles even after methane is gone(here $mpsr$ indicates that pump is kept off iff there was methane present is last two cycles). 

}

\subsection{Computing a Controller using default value}
 \label{subsec:computeController}
 The controller can be computed from \GODSC for a given default value order $ord$, using the similar algorithm as given for \GODSC computation. Here we assume that the default values provided are the soft requirements and $H$ is equal to 1. So the \GODSC computation algorithm will try to choose those transitions with outputs that locally satisfy the default values given in $ord$.
 As the $ord$ provides default values for every output variable, so there will always be a unique output that will maximally satisfy the default value. Hence, the output will always be a deterministic Mealy machine i.e., we will get a controller.
\section{Comparison with Other tools}
\label{appendix:comparision}
In Table \ref{tab:comparision} we have compared the performance of DCSynth with few leading tools
for LTL synthesis.
The examples in QDDC are
manually translated into bounded LTL properties for giving them as input to Acacia+ \cite{BBFJR12} and BoSy \cite{FFT17}.
We have only considered examples with \emph{hard requirements} as these tools do not support \emph{soft requirements}. The on-line version of BoSy tool was used which enforces a maximum timeout of 600 seconds.
For other tools, a local installation on Linux (Ubuntu 16.04) system with Intel i5
64 bit, 2.5 GHz processor and 4 GB memory was used with a time out of 3600 seconds. In this comparison
\DCSYNTH\/ was used with symbolic algorithm for both \MPNC and \GODSC computation. Note that for these examples the \GODSC algorithm will always terminate after 1 iteration only, as the examples
do not have soft requirements, so \DCSYNTH\/ chooses
one of the possible outputs from the \MPNC based on default output order. We have provided default output order for all types of arbiter example as $a_1 > \ldots > a_i$ and for Mine-pump example it is $PumpOn$.

\begin {table}[!h]
\caption {Comparison of Synthesis in Acacia+, BoSy and \DCSYNTH, in terms of controller computation time and memory and number of states of the controller automaton. $Minepump$ as well as $Arb^{tok}(n)$ specifications can be found in Appendix \ref{sec:casestudy_arbiter}.}
\label{tab:comparision}
\begin{minipage}{\textwidth}
        \begin{center}
        \begin{tabular}
        {|c|c|c|c|c|c|c|}
                \hline
                 &    \multicolumn{2}{|c|}{ Acacia+ } &    \multicolumn{2}{|c|}{ BoSy }    & \multicolumn{2}{|c|}{ DCSynth }\\
                \cline{2-7}
                Hard Requirement& time(Sec) & Memory / & time(Sec) & Memory / & time(Sec) & Memory / \\
                & & States & & States & & States \\
                \hline  
                $Arb^{hard}(4,4)$ & 0.4 & 29.8/ 55 &  0.75 & -/4 & 0.08 & 9.1/ 50 \\
                \hline    
                $Arb^{hard}(5,5)$ & 11.4 & 71.9/ 293 & 14.5 & -/8 & 5.03 & 28.1/ 432 \\
                \hline    
                $Arb^{hard}(6,6)$ & TO\footnote{TO=timeout(DCSynth and Acacia+ 3600secs, BoSy 600secs)}  & - & TO  & - & ~ 80 & 1053.0/ 4802\\
                \hline  
                $Arb^{tok}(7)$ & 9.65 & 39.1/ 57 &  TO & - & 0.3 & 7.3/ 7 \\
                \hline    
                $Arb^{tok}(8)$ & 46.44 & 77.9/ 73 & -  & - & 2.2 & 16.2/ 8\\
                \hline    
                $Arb^{tok}(10)$ & NC\footnote{NC=synthesis inconclusive} & - & -  & - & 152 & 82.0/ 10 \\
                \hline    
		Mine-pump & NC & - & TO  & - & 0.06 & 50/ 32\\
		\hline
	\end{tabular}
	Experiments with BoSy are using online version.
\end{center}
\end{minipage}
\end{table}

As the comparison table above
shows, the \DCSYNTH\/  approach seems to outperform the state-of-the-art tools in scalability and 
controller computation time.
This is largely due to the pragmatic design choices made in the logic \qddc\/ and tool \DCSYNTH.

  It can also be seen that BoSy often results in controller with fewer states. BoSy is specifically
optimized to resolve non-determinism to get fewer states. In our case, the tool is optimized
to satisfy maximal number of soft requirements. It would be interesting to merge the two techniques
for best results.

\section{Measuring latency using Model Checking}
\label{sec:qualitymeasure}
Plethora of synthesis algorithms and optimizations give rise to diverse controllers for the same requirement.
In comparing the quality of these different controllers, an important measure is their {\em worst case latency}. Latency can be
defined as time (number of steps) taken to achieve some desired behaviour. In our framework,
for latency specification, user must give a \qddc\/ formula $D^{p}$ characterizing execution fragments of interest. 
For example the \qddc formula $D^{p}$ = \verb#[[req && !ack]]# specifies fragments of execution with 
request continuously {\em true}  but with no acknowledgment. Given a DFA (controller) $M$, the latency goal $MAXLEN(D^{p},M)$
computes $\sup \{e-b \mid \rho,[b,e] \models D^{p}, \rho \in Exec(M) \}$, i.~e.~it computes the length of
the longest interval satisfying $D^{p}$ across all the executions of $M$. Thus, it computes worst case latency for achieving behaviour $D^p$
in $M$. For example, given a synchronous bus arbiter controller $Arb$, 
goal $MAXLEN(\verb#[[req && !ack]]#,Arb)$ specifies the worst case response time of the arbiter $Arb$.
Tool CTLDC, which like \DCSYNTH\/ and DCVALID is member of DCTOOLS suite of tools, provides efficient computation of $MAXLEN$ by 
symbolic search for longest paths as formulated in article \cite{Pan05}. 
This facility will be used subsequently in the paper to compare the  worst case
response times achieved by various controllers synthesized under different criteria.

\oomit{
Given a controlled system $M ~=~ A(Assumptions) \times Controller$ as safety automaton $M$, and
a duration calculus formula $D$, aim is to compute $MAXLEN(D,M)$. We assume that all states of $M$ are reachable.
We first construct NFA $M'$ by turning every accepting state of $M$
as initial state. Clearly $M'$ accepts all fragments of behaviors of $M$. Let $N = M' \times A(D)$. Then,
$N$ accepts all fragments of behaviors of $M$ which satisfy $D$. Now we find the length of the longest accepting path in $N$. Campos \emph{et al} have given symbolic BDD-based algorithm for computing longest and shortest such paths \cite{CCMMH94}. We adapt and implement this in tool DCSynth to compute  the
values of $MAXLEN(D,M)$ as well as $MINLEN(D,M)$. 
This method was previously proposed in \cite{Pan05} where it was shown to be efficient. 
}

\begin {table}[!h]
\caption {Worst Case Response Time Analysis using CTLDC using Response Formula MAXLEN([[$req_i$ \&\& !$ack_i$]]) computation. The value of $H$ is specified
only for $Arb^{soft}$}
\label{tab:perfMeasure2}
\begin{center}
	\begin{tabular}	{|c|c|c|c|c|}
	\hline
		Sr.No & 
		Arbiter Variant & 
		Horizon (H)& 
  \multicolumn{2}{c}{Computed Response} \vline \\
	\hline
	
		& & & 
		Response for $i^{th}$ cell & 
		Value (in cycles) \\		
		
	\hline
	    1 & 
	    $Arb^{hard}(5,5)$  & 
	     -& $1 \leq i \leq 5$ &
	     5 \\
\hline
	    2 & 
	    $Arb^{hardAssume}(5,3,2)$  & 
	    -& i = 1 &
	    2 \\

	    3 & 
	      & 
	     & $2 \leq i \leq 5$ &
	    3 \\
	    
	 \hline	
	    4 & 
	    $Arb^{soft}(5,3)$  & 
	    (H = 1) & $1 \leq i \leq 4$ &
	    $\infty$ \\
	    
	    5 & 
	     & 
	     & i = 5 &
	    3 \\

	 \hline

	    6 & 
	    $Arb^{soft}(5,3)$  & 
	    (H = 2) & $1 \leq i \leq 3$ &
	    $\infty$ \\

	    7 & 
	    & 
	     & $4 \leq i \leq 5$ &
	    3 \\
		
\hline	
		8 & 
		$Arb^{soft}(5,3)$  & 
		(H $>=$ 3) & $1 \leq i \leq 2$ &
		$\infty$ \\

	    9 & 
	    & 
	    & $3 \leq i \leq 5$ &
	    3  \\

	 \hline
	 
\oomit{
		10 & Minepump(MPV1) & [[AssumptionOk \&\& HH2O]] & 5 \\
	 \hline
	    11 & Minepump(MPV2) & [[AssumptionOk \&\& HH2O]] & 7 \\
	 \hline
	    12 & Minepump(MPV3) & [[AssumptionOk \&\& HH2O]] & 8 \\
	 \hline
	 13 & Minepump(MPV4) & [[AssumptionOk \&\& HH2O]] & 6 \\
	 \hline
}
	\end{tabular}
\end{center}
\end{table}

Table.~\ref{tab:perfMeasure2} gives worst case latency measurements carried out using tool CTLDC for various controllers synthesized using \DCSYNTH.
For Arbiter examples, worst case response time (maximum number of cycles a request remains true continuously, without an acknowledgment) is measured using a CTLDC
formula MAXLEN([[$req_i$ \&\& !$ack_i$]]), for each cell $i$ of various arbiters
discussed in section \ref{section:motivation}.
We use arbiter variants with 5 cells (i.e. 
$1 \le i \le 5$) for our experiments.

\oomit{
We also give an example of MinePump case study for different soft requirements and measure its quality
w.r.t. the
maximum time for which the water level can
remain high for different version of Minepump.
The interpretation of the results is described as follows:
}
\begin{itemize}
\item The specification $Arb^{hard}$ and $Arb^{hardAssume}$ do not have soft 
requirements, therefore guided synthesis
will choose an arbitrary output from the 
constructed \MPNC, without any value iteration (i.e. H = 1). The results for these are described as follows
\begin{itemize}

 \item $Arb^{hard}(5,5)$ has worst case response time for each cell as
5 cycles, this would happen when all the 
request lines are continuously $on$ and the
controller gives acknowledgment to each
cell in round robin fashion. 

\item $Arb^{hardAssume}(5,3,2)$ has worst case response for
first cell is 2 cycles, whereas for all the other cells it is 3 cycles, provided the assumptions are met. If assumptions are not
met, then 3 cycle response cannot be guaranteed (If request from all 5 cells is $on$ continuously). 
Assumption put a constraint that at most 2 requests can be $on$ at any
point of time. 

\end{itemize}

\item For the specification $Arb^{soft}(5,3)$ the response requirement is that all the cell should get an
acknowledgment within 3 cycles if the request 
is continuously true (\emph{it would be unrealizable 
if we use only hard
requirement}). However, a controller
which satisfies these requirements as much 
as possible was generated using \DCSYNTH. 
\begin{itemize}

\item For example, $Arb^{soft}(5,3)$ ``tries'' to give acknowledgment within 3 cycles with 
higher priority assigned to higher numbered cell (see the description in Section \ref{section:motivation}). 
However, when all the requests are $on$ simultaneously then $req_5$ gets the highest
priority and hence can always have worst case response time of 3 cycles, but  $req_1$
given the lowest priority may end up with 
worst case response time of $\infty$ (when the request from higher number cell
is always true).  

\item Another important observation is that \DCSYNTH may generates different controllers for different horizons (value iterations)
given for \GODSC computation. More intuitively, as
the value of horizon tends to $\infty$, the 
controller produced reaches closer to the
global optimality.
This effect can be seen from row number 4--9,
where the horizon moves from 1 to more than 2.
For horizon $1$, \DCSYNTH produces a locally
optimal controller and hence the controller
produced only guarantees the response time
for the highest priority cell (i.e. cell no. 5, see row number 5).
For all other cells the worst case response is $\infty$. When the horizon bound is increased to
2, the controller produced meets the response 
requirements for 2 cells (i.e. cell no. 4 and 5, see row number 7). Finally, when the bound is 
increase to 3 or more, the controller produced
guarantees the worst case response for 3 cells (i.e. cell no. 3, 4 and 5 see row number 9).
It can be seen that the maximum number of cells
that can meet the 3 cycle response will be 3, 
in worst case. Therefore, increasing horizon 
beyond 3 does not change the result.

\end{itemize}
\oomit{
\item For Minepump example we synthesized different
controller for different soft requirements.
For the $Minepump$ case study rows 10--13  give the maximum amount of time (in cycles) 
for which the water level can remain high (indicated by the variable {\em HH2O}) 
without violating the assumptions (indicated by {\em AssumptionOk}).
It can be seen that the time for which 
the water remains high continuously shall be
minimum (maximum) when the $pump$ is kept $on$(\emph{off}) continuously,
which can be seen by the result given in row number 10 (12). The details of the Minepump Case Study are given in Appendix \ref{section:casestudy}.
}
\end{itemize}
\vspace*{-0.3cm}

\oomit{
Examples $Arb{soft}$ and $Minepump$ illustrate the impact of soft goals on controller behaviour as well as controller latency under various scenarios and how soft requirements affect the controller synthesis in \DCSYNTH.
}
  
\oomit{
The worst case latency measurement shows that \verb#req6# has response time of 1 cycle 
whereas \verb#req5# has response time of 2 cycles. 
For all other cells  the response time is $\infty$ since cells 6 and 5 can consume all the cycles.
Note that when \verb#req6# is absent throughout 
the response time of cell 4 changes from 
$\infty$ to 3 cycles as shown in the $4^{th}$ row of the Table.~\ref{tab:perfMeasure2}. 
This points to the {\em robustness} of
the synthesized controller.
}

\oomit{
 Here, $Minepump(req)$
denotes $Minepump$ specification with soft requirement $req$ as given in  Appendix \label{section:minepumpexample}.
For example, soft requirement MPV3 is \verb#!PumpOn# which tries to keep pump {\em off} 
as much as possible where as
soft requirement MPV1 is \verb#PumpOn# which tries to keep pump {\em on} as much as possible. 
As a result $Minepump(MPV1)$ gets rid of water in 4 cycles compared to 8 cycles for $Minepump(MPV3)$.}


\end{document}